\definecolor{darkgreen}{rgb}{0,0.65,0}
\declaretheorem[style=plain, numberwithin=section]{theorem}
\declaretheorem[style=definition,name=Definition,qed=$\blacksquare$, numberwithin=section, sibling=theorem]{definition}
\newtheorem{lemma}[theorem]{Lemma}
\newtheorem{proposition}[theorem]{Proposition}
\newtheorem{corollary}[theorem]{Corollary}
\newcommand{\Tr}{\operatorname{Tr}}
\newcommand{\SL}{\operatorname{SL}}
\newcommand{\GL}{\operatorname{GL}}
\newcommand{\SU}{\operatorname{SU}}
\newcommand{\Or}{\operatorname{O}}
\newcommand{\U}{\operatorname{U}}
\newcommand{\SO}{\operatorname{SO}}
\newcommand{\CC}{\mathbb C}
\newcommand{\RR}{\mathbb R}
\newcommand{\ZZ}{\mathbb Z}
\newcommand{\im}{\mathrm i}
\title[Classification of five-qubit absolutely maximally entangled states]{Classification of five-qubit absolutely maximally entangled states}
\author{Ian Tan}
\thanks{Department of Mathematics and Statistics, Auburn University, Auburn, AL (\href{mailto:yzt0060@auburn.edu}{yzt0060@auburn.edu})}
\begin{document}
\begin{abstract}
We classify the local unitary equivalence classes of absolutely maximally entangled (AME) states of five qubits. We show that every 5-qubit AME state is equivalent to a state within the unique $((5,2,3))$ quantum error-correcting code $\mathcal{C}$, and that two such states are equivalent if and only if they are related by the action of a transversal gate of $\mathcal{C}$. Furthermore, we exhibit a set of three invariant polynomials that separates these equivalence classes. 

As auxiliary results, we construct a 3-uniform $n$-qubit state for even $n\geq 6$, determine the local symmetries of the 6-qubit AME state, and explain how these symmetries are related to the transversal gates of both the $((5,2,3))$ and $((4,4,2))$ codes. Additionally, we demonstrate that every 4-qubit pure code of distance 2 is equivalent to a subspace of a $((4,4,2))$ code. 

Our approach leverages an embedding of the 4-qubit state space into the Lie algebra of $8\times 8$ skew-symmetric matrices, allowing us to apply results from Vinberg's theory of graded Lie algebras.
\end{abstract}
\maketitle

\section{Introduction}
The entanglement of multipartite quantum states plays a central role in many quantum information processing tasks. Thus, it is crucial to understand the nature of entanglement. Among the most highly entangled multipartite states are the absolutely maximally entangled (AME) states. These AME states have applications in quantum secret sharing \cite{PhysRevA.86.052335}, open-destination teleportation \cite{Helwig:2013qoq}, quantum information masking \cite{PhysRevA.104.032601}, and quantum error correction \cites{Raissi_2018,Huber2020}. A closely related and more general notion is that of an $r$-uniform state, which are states in $(\CC^D)^{\otimes n}$ for which every reduction to $r$ subsystems is maximally mixed. By Schmidt decomposition, $r$-uniform states do not exist when $r>\lfloor \frac{n}{2}\rfloor$. An AME state is an $r$-uniform state such that $r=\lfloor \frac{n}{2}\rfloor$.

A longstanding research problem is to determine for which multipartite systems AME states exist \cites{Huber_2018,AMEtable}. In this pursuit, one shows the existence of AME states by construction \cites{Helwig:2013qoq,PhysRevA.92.032316} or the nonexistence often through bounds that are violated \cites{Huber_2018,PhysRevA.69.052330,10718358,796376}. After establishing the existence of AME states in a state space, what else can be said? For 3-partite systems, all AME states are critical and a lot is known about such states \cites{Bryan2019locallymaximally,Slowik2021,Bryan2018}. Outside of this relatively simple situation, it is known that the state spaces $(\CC^2)^{\otimes 6}$ and $(\CC^3)^{\otimes 4}$ each contain only one normalized AME state up to local unitary equivalence \cites{746807,PhysRevA.108.032412}. For some systems, we know that there exist infinitely many inequivalent AME states \cites{Burchardt,PhysRevA.108.032412,Ramadas_2025}.

In this paper, we give a complete classification of AME states of five qubits. Specifically, we prove the following (in \Cref{thm:mainresult} and elsewhere in this paper, $X$ and $Z$ denote the Pauli matrices):
\begin{theorem}[Main theorem]\label{thm:mainresult}
    Every AME state in $(\CC^2)^{\otimes 5}$ is $\SU_2^{\times 5}$-equivalent to a point of the $((5,2,3))$ code $\mathcal{C}_1$. Two points in $\mathcal{C}_1$ are $\SU_2^{\times 5}$-equivalent if and only if they are $W$-equivalent, where $W\cong\langle\im X,\im Z,H\rangle$ is the binary tetrahedral group and
    \[
H:=\frac{1}{2}\begin{pmatrix}
    1+\im & 1+\im \\
    -1+\im & 1-\im
\end{pmatrix}.
    \]
    Futhermore, there exist three $\SL_2^{\times 5}$-invariant polynomials $\mathscr{F}_6,\mathscr{F}_8,\mathscr{F}_{12}$ of degrees $6$, $8$, and $12$ such that two AME states $\ket{\varphi},\ket{\psi}\in(\CC^2)^{\otimes 5}$ are $\SU_2^{\times 5}$-equivalent if and only if $\mathscr{F}_i(\ket{\varphi})=\mathscr{F}_i(\ket{\psi})$ for all $i\in\{6,8,12\}$.
\end{theorem}

Much is known about the entanglement classes of 3-qubit states \cite{PhysRevA.62.062314}. In particular, every normalized 3-qubit AME state is equivalent to the GHZ state $\frac{1}{\sqrt{2}}(\ket{000}+\ket{111})$ \cite{Bryan2019locallymaximally}. Moreover, it is well-known that AME states of four qubits do not exist \cite{HIGUCHI2000213}. Thus, five is the smallest number of qubits which presents a genuine challenge. \Cref{thm:mainresult} is the first result of its kind: it is the first complete classification of AME states in a state space where there exists more than one class of AME states and where critical states are not always AME.

The proof of \Cref{thm:mainresult} is achieved by employing deep connections between $r$-uniform states, quantum error correcting codes, and Vinberg's theory of graded Lie algebras. Briefly, the $((6,1,4))$ code, the $((5,2,3))$ code $\mathcal{C}_1$, and the $((4,4,2))$ code $\mathcal{C}_2$ are all unique up to local unitary equivalence; together, they constitute a family of quantum error correcting codes related by a construction of Rains \cites{Rains1996QuantumWE,746807}. The span of any 6-qubit AME state is a $((6,1,4))$ code and the local symmetries of a 6-qubit AME state determine the transversal gates of the codes $\mathcal{C}_1$ and $\mathcal{C}_2$. The code $\mathcal{C}_2$ is also a Cartan subspace of the $\SL_2^{\times 4}$-module $(\CC^2)^{\otimes 4}$ viewed as the grade-1 subspace of the $\ZZ_2$-graded Lie algebra
$
\mathfrak{so}_8\cong\mathfrak{so}_4^{\times 2}\oplus (\CC^2)^{\otimes 4}.
$
Furthermore, the normalizer of $\mathcal{C}_2$ as a Cartan subspace is the group of transversal gates of $\mathcal{C}_2$ as a quantum error correcting code.

The fact that $(\CC^2)^{\otimes 4}$ lives in the graded Lie algebra $\mathfrak{so}_8$ has many consequences in quantum information theory. For example, it was used in previous work to determine the equivalence classes of 4-qubit AME states related by stochastic local operations with classical communication \cites{ChtDjo:NormalFormsTensRanksPureStatesPureQubits,dietrich2022classification}. It has also been used to find 4-qubit states that are stationary points of entanglement measures arising from $\SL_2^{\times 4}$-invariants \cite{4qubit}. We emphasize in particular the following, which is comparable to \Cref{thm:mainresult}.

\begin{theorem}[Classification of 4-qubit 1-uniform states]\label{thm:wallach1u}
    Every 1-uniform state in $(\CC^2)^{\otimes 4}$ is $\SU_2^{\times 4}$-equivalent to a point of the $((4,4,2))$ code $\mathcal{C}_2$. Two points in $\mathcal{C}_2$ are $\SU_2^{\times 4}$-equivalent if and only if they are $W$-equivalent, where $W$ is the Weyl group of the Lie algebra $\mathfrak{so}_8$.
\end{theorem}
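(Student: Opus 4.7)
The plan is to prove both claims through the interplay of Kempf--Ness theory (which identifies $1$-uniform states with minimal vectors and lets us pass from $\SL_2^{\times 4}$-equivalence down to $\SU_2^{\times 4}$-equivalence) and Vinberg theory (which controls the $\SL_2^{\times 4}$-orbits in the grade-1 piece $(\CC^2)^{\otimes 4}$ via the Cartan subspace $\mathcal{C}_2$). The very first step I would take is to recast $1$-uniformity as a moment map condition: a state $\ket{\varphi}$ is $1$-uniform precisely when each single-qubit reduction $\Tr_{\hat i}\ket{\varphi}\bra{\varphi}$ is a scalar matrix, and this is exactly the vanishing of the $\SU_2^{\times 4}$ moment map (equivalently, $\ket{\varphi}$ is a minimal vector in its $\SL_2^{\times 4}$-orbit). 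Kempf--Ness then supplies three facts I will lean on throughout: (i) the $\SL_2^{\times 4}$-orbit of a $1$-uniform state is closed, (ii) every closed orbit contains a $1$-uniform representative, and (iii) two $1$-uniform states lie in the same $\SL_2^{\times 4}$-orbit if and only if they lie in the same $\SU_2^{\times 4}$-orbit.

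For the first assertion $\{1\text{-uniform states}\}=\SU_2^{\times 4}\mathcal{C}_2$, the containment $\supseteq$ is clear since $\mathcal{C}_2$ is a $1$-uniform subspace and unitaries preserve the property. For $\subseteq$, I apply the grade-$1$ analogue of the Jordan decomposition due to Vinberg: since the grade-$0$ part $\mathfrak{so}_4^{\times 2}$ acts on $\mathfrak{g}_1=(\CC^2)^{\otimes 4}$, every closed $G_0$-orbit in $\mathfrak{g}_1$ meets the Cartan subspace $\mathcal{C}_2$. Given a $1$-uniform $\ket{\varphi}$, its orbit is closed, so $g\cdot\ket{\varphi}\in\mathcal{C}_2$ for some $g\in\SL_2^{\times 4}$; the image is again $1$-uniform, so by the Kempf--Ness uniqueness of minimal vectors up to the compact form, $g$ can be replaced by an element of $\SU_2^{\times 4}$.

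For the second assertion, I invoke Vinberg's theorem on $\theta$-groups: the inclusion $\mathcal{C}_2\hookrightarrow\mathfrak{g}_1$ induces an isomorphism $\mathcal{C}_2/W\cong \mathfrak{g}_1/\!/G_0$, where $W=N_{G_0}(\mathcal{C}_2)/Z_{G_0}(\mathcal{C}_2)$ is the little Weyl group; equivalently, for $v,w\in\mathcal{C}_2$ we have $G_0v=G_0w$ iff $w\in Wv$. Because $\mathcal{C}_2$ is in fact a full Cartan subalgebra of $\mathfrak{so}_8$ (not merely a Cartan subspace of the grade-$1$ piece), the little Weyl group of the graded pair coincides with the honest Weyl group $W(\mathfrak{so}_8)$; this is the ``split'' case of the Vinberg classification and I would verify it by exhibiting explicit normalizing elements of $\SL_2^{\times 4}$ that realize a generating set of $W(D_4)$ on $\mathcal{C}_2$ (permutations of the $\ket{u_i}$ together with even sign flips). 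Combining with Kempf--Ness on $\mathcal{C}_2$: if $v,w\in\mathcal{C}_2$ satisfy $w=u\cdot v$ for $u\in\SU_2^{\times 4}$, then a fortiori $u\in\SL_2^{\times 4}$, so $w\in Wv$; conversely, every $W$-orbit representative of $v$ is obtained by an element of the normalizer, which can be chosen compact since both endpoints are minimal vectors.

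The main obstacle, I expect, will be the careful identification of the little Weyl group with the full $W(\mathfrak{so}_8)$ and the compatibility of the ambient real structures: one has to check that the real form of $\mathfrak{so}_8$ singled out by the compact group $\SU_2^{\times 4}$ on $\mathfrak{g}_0$ is the same one for which the Kempf--Ness correspondence implements the Weyl group inside the compact normalizer of $\mathcal{C}_2$. Once this bookkeeping is done, both statements of the theorem follow by assembling the steps above.
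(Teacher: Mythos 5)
Your proposal is correct and follows essentially the same route as the paper: Kempf--Ness (criticality as the minimal-vector/moment-map condition, closedness of orbits, and the upgrade from $\SL_2^{\times 4}$- to $\SU_2^{\times 4}$-equivalence) combined with Vinberg's conjugacy of Cartan subspaces for the first claim, and the Chevalley-type restriction isomorphism $\CC[\mathfrak{g}_1]^{G_0}\cong\CC[\mathfrak{h}]^{W}$ together with separation of finite-group orbits by invariants for the second. The only cosmetic difference is that the paper reaches the first claim via the subspace-level statements ($[M,M^\dagger]=0$ and critical abelian subspaces in Lemmas 5.1--5.3), which it needs later anyway to classify all $4$-qubit pure codes, whereas you argue directly for a single state.
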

Although \Cref{thm:wallach1u} is an easy consequence of standard results from Vinberg's theory and geometric invariant theory, it is not usually stated in the language of quantum information. We will provide a proof of \Cref{thm:wallach1u} later in the text for the reader's convenience, for comparison with the proof of \Cref{thm:mainresult}, and because it does not take much additional effort. It is also known that the set of $\SL_2^{\times 4}$-invariant polynomials on $(\CC^2)^{\otimes 4}$ is a free algebra generated by four polynomials of degrees 2, 6, 8, and 12 \cite{GourWallach2014}. The $\SU_2^{\times 4}$-orbits of 4-qubit 1-uniform states can be distinguished by their values on these generators. In general, the $\SU_D^{\times n}$-orbits of $r$-uniform states in $(\CC^D)^{\otimes n}$ are separated by the set of all $\SL_D^{\times n}$-invariant polynomials \cite{PhysRevLett.111.060502}. However, there are many $\SL_D^{\times n}$-invariant polynomials and they are generally difficult to compute. For five qubits, finding the complete set of secondary $\SL_2^{\times 5}$-invariants seems beyond reach \cite{Luque_2006}. It is thus noteworthy that we exhibit three $\SL_2^{\times 5}$-invariant polynomials that separate the $\SU_2^{\times 5}$-orbits of AME states in $(\CC^2)^{\otimes 5}$. We find it mathematically interesting that \Cref{thm:mainresult} extends phenomena of the $\SL_2^{\times 4}$-module $(\CC^2)^{\otimes 4}$ to the $\SL_2^{\times 5}$-module $(\CC^2)^{\otimes 5}$ even though $(\CC^2)^{\otimes 5}$ is not naturally embedded in a graded Lie algebra. Note that the groups mentioned in \Cref{thm:mainresult,thm:wallach1u} are precisely the groups of operators on $\mathcal{C}_1$ and $\mathcal{C}_2$ implemented by the transversal gates (as defined in \Cref{sec:stab}) of these codes.

On the way to proving \Cref{thm:mainresult}, we achieve several other notable results, listed below. These further highlight the interconnections revealed in this work.
\begin{itemize}
    \item[1.] We construct a family of 3-uniform states $\ket{\Psi_m}\in (\CC^2)^{\otimes 2m}$ for $m\geq 3$. In contrast to other constructions of $r$-uniform states \cites{PhysRevA.90.022316,PhysRevA.97.062326,PhysRevA.99.042332,Pang2019}, which are often combinatorial, ours arises from a change of coordinates related to the double cover $\SL_2\times\SL_2\to \SO_4$. In previous work, Zang et al. \cite{ZangTianFeiZuo} construct 3-uniform states in $(\CC^2)^{\otimes n}$ for all $n\geq 6$ and $n\neq 7,8,9,11$. Note that 3-uniform states of seven qubits do not exist \cite{PhysRevLett.118.200502}.
    \item[2.] We calculate the local invertible operations that fix the 6-qubit AME state, also known as local symmetries. By a result of Spee et al., the local symmetries of a state tell us about the reachability of elements in its $\SL_D^{\times n}$-orbit: it gives a characterization of the pure states that can be converted to a given state in the orbit via finitely many rounds of local operations assisted with classical communication \cite{Spee_2017}. We show that these local symmetries determine the transversal gates of $\mathcal{C}_1$ and are closely related to the transversal gates of $\mathcal{C}_2$. Englbrecht and Kraus have a general algorithm for finding the local symmetries of a graph state \cite{PhysRevA.101.062302}, of which the 6-qubit AME state is an example \cite{Helwig:2013qoq}. However, they do not describe the local symmetries of the $6$-qubit AME state explicitly, nor do they mention the connection to transversal gates.
    \item[3.] We show that every pure code in $(\CC^2)^{\otimes 4}$ is a subspace of a $((4,4,2))$ code. This is analogous to the Vinberg theoretical fact that every abelian subspace consisting of semisimple elements is a subspace of a Cartan subspace.
\end{itemize}

We cover preliminaries in \Cref{sec:prelims}, including content on critical states, $r$-uniform states, and quantum error correcting codes. We also explain a certain construction of new codes from old ones due to Rains; this process allows one to generate the $((5,2,3))$ and $((4,4,2))$ codes from the 6-qubit AME state.

In \Cref{sec:three}, we study various aspects of the 6-qubit AME state. In \Cref{sec:3uniform}, we show that the AME state belongs to a family of 3-uniform states with members in $(\CC^2)^{\otimes n}$ for each even $n\geq 6$. In \Cref{sec:stab}, we calculate the local symmetries of the AME state and explain how they relate to the transversal gates of the codes it generates.

In \Cref{sec:vinberg}, we explain how $(\CC^2)^{\otimes 4}$ is embedded in the $\ZZ_2$-graded Lie algebra $\mathfrak{so}_8$. This embedding allows us to apply various results from Vinberg's theory of graded Lie algebras. We explain that the transversal gates of $\mathcal{C}_2$ implement the Weyl group of $\mathfrak{so}_8$, prove \Cref{thm:wallach1u}, and show that every nontrivial pure code in $(\CC^2)^{\otimes 4}$ is a subspace of a $((4,4,2))$ code.

In \Cref{sec:classification}, we prove our main result, \Cref{thm:mainresult}. An explicit description of the three invariant polynomials that separate the $\SU_2^{\times 5}$-orbits of 5-qubit AME states can also be found in this section.

Some of our proofs rely on symbolic computations carried out using the computer algebra system Macaulay2 \cite{M2} (in particular, the proofs of \Cref{thm:3uniform}, \Cref{thm:first}, \Cref{thm:surj}, and \Cref{prop:invariantcomputation}). Refer to the ``code availability'' section at the end of the paper for access to the accompanying Macaulay2 files. We also included a file for symbolic evaluation of the three separating invariants of \Cref{thm:mainresult}.

In the appendix, standard definitions and facts from algebraic geometry (Appendix~\ref{sec:algebraic_geometry}) and invariant theory (Appendix~\ref{sec:invariant_theory}) are found.

\section{Preliminaries}\label{sec:prelims}

\subsection{Notation and conventions} In this section, we establish the notation, conventions, and mathematical terminology used throughout the paper.

\subsubsection{Miscellaneous}
In this paper, a ``state'' is simply a nonzero vector in $(\CC^D)^{\otimes n}$. We write $X$, $Y$, and $Z$ to denote the Pauli matrices.  If $v$ is a vector, $\bar{v}$ is its complex conjugate. The symmetric group on $m$ elements is $\mathfrak{S}_m$.

\subsubsection{Group representations}
Let $G$ be a group. A \textit{$G$-module} is a pair $(V,\rho)$, where $V$ is a complex vector space and $\rho:G\to \GL(V)$ is a representation of $G$. For simplicity, we also call $V$ a $G$-module, since the representation $\rho$ will be clear from context. We use a period to denote a group element acting on a vector, so $g.v=\rho(g)v$ for all $g\in G$ and $v\in V$. A $G$-module $(V,\rho)$ and a $H$-module $(W,\sigma)$ are \textit{equivalent} if there exist an invertible map $F:V\to W$ and an isomorphism $\phi:\rho(G)\to \sigma(H)$ such that $F(\rho(g)v)=\phi(\rho(g))F(v)$ for all $g\in G$ and $v\in V$.

Suppose $G$ is a subgroup of $\GL(\CC^D)$, so $\CC^D$ is a $G$-module. Then $(\CC^D)^{\otimes n}$ is naturally a $G^{\times n}$-module, where the representation $\rho:G^{\times n}\to\GL((\CC^D)^{\otimes n})$ is defined by
\[
\rho(g_1,\dots,g_n)=g_1\otimes\dots\otimes g_n,\quad g_i\in G.
\]
We write $G^{\otimes n}=\rho(G^{\times n})$ to denote the image of this representation.

\subsubsection{Lie groups and Lie algebras} We assume that a Lie group or Lie algebra is complex unless otherwise indicated. For example, $\SL_D=\SL_D(\CC)$ is a complex Lie group while $\SL_D(\RR)$ is a real Lie group.

\subsubsection{Tensor contraction}
Let $\ket{\varphi}\in V_1\otimes V_2\otimes\dots\otimes V_n$, where each $V_i\cong\CC^D$. As a tensor, $\ket{\varphi}$ corresponds to a linear map $V_1^*\to V_2\otimes\dots\otimes V_n$. We write $\bra{\psi}_1\ket{\varphi}$ for the image of $\bra{\psi}\in V_1^*$ under this map. Note that $\ket{\varphi}=\sum_i \ket{i}\ket{\varphi_i}$, where $\ket{\varphi_i}=\bra{i}_1\ket{\varphi}$ for $i=0,1,\dots,D-1$.

\subsection{Critical states} We discuss some properties of an important class of states known as critical states. First, let us recall the definition of an $r$-uniform state.
\begin{definition}
    A state $\ket{\varphi}\in(\CC^D)^{\otimes n}$ is $r$-\textit{uniform} if $\Tr_S(\ket{\varphi}\bra{\varphi})\propto I$ for every subset $S\subset\{1,2,\dots,n\}$ such that $|S|=n-r$.
\end{definition}
As mentioned in the introduction, if $\ket{\varphi}$ is $r$-uniform, then $r\leq \lfloor \frac{n}{2}\rfloor$. An $r$-uniform state is \textit{absolutely maximally entangled} or AME if $r= \lfloor \frac{n}{2}\rfloor$. On the other extreme, we say that $\ket{\varphi}$ is \textit{critical} if it is 1-uniform. Since the reduction of a maximally mixed state is again maximally mixed, an $r$-uniform state is $r'$-uniform for every $1\leq r'\leq r$. In particular, all $r$-uniform states are critical.

\begin{proposition}\label{prop:equiv}
    A state $\ket{\varphi}\in(\CC^D)^{\otimes n}$ is critical if and only if $\bra{\varphi}E\ket{\varphi}=0$ for every $E$ in the Lie algebra of $\SL_D^{\otimes n}$.
\end{proposition}
\begin{proof}
    This characterization of critical states is taken to be the definition in \cite{Gour_2011}. It is equivalent to our definition by \cite[Theorem 3]{Gour_2011}.
\end{proof}

\begin{proposition}[The Kempf-Ness theorem, {\cite[Theorem~2]{Gour_2011}}]\label{prop:K-N}
Let $\ket{\varphi}\in(\CC^D)^{\otimes n}$. The following are true:
\begin{itemize}
    \item[1.] $\ket{\varphi}$ is critical if and only if $\bra{\varphi}g^\dagger g \ket{\varphi}\geq \braket{\varphi|\varphi}$ for all $g\in \SL_D^{\otimes n}$.
    \item[2.] If $\ket{\varphi}$ is critical and $g\in \SL_D^{\otimes n}$ such that $\bra{\varphi}g^\dagger g \ket{\varphi}=\braket{\varphi|\varphi}$, then there exists $h\in \SU_D^{\otimes n}$ such that $g\ket{\varphi}=h\ket{\varphi}$.
    \item[3.] The $\SL_D^{\times n}$-orbit of $\ket{\varphi}$ is closed if and only if it contains a critical state.
\end{itemize}
\end{proposition}

Note the following consequence of the Kempf-Ness theorem: two critical states are $\SL_D^{\times n}$-equivalent if and only if they are $\SU_D^{\times n}$-equivalent. This explains why $\SL_2^{\times 5}$-invariants separate $\SU_2^{\times 5}$-orbits of 5-qubit AME states in \Cref{thm:mainresult}.

\subsection{Quantum error correcting codes}\label{sec:QECC} Quantum error correcting codes are subspaces $\mathcal{C} \subset (\mathbb{C}^D)^{\otimes n}$ designed to detect and correct local errors. The state of a single $K$-level system is encoded into a larger $n$-partite system by embedding it into the subspace $\mathcal{C}$, where $K=\dim\mathcal{C}$. Our interest in quantum error-correcting codes stems from their close connection with $r$-uniform states: in particular, an especially well-behaved class of codes, known as pure codes, is equivalent to subspaces consisting of $r$-uniform states.

In order to formally state the definition of a quantum error correcting code, we need to introduce some preliminary notions. Let $\mathcal{B}$ be an orthogonal basis of $\CC^{D\times D}$ such that $I\in\mathcal{B}$. From $\mathcal{B}$, we obtain the basis
$
\mathcal{E}=\{E_1\otimes\dots\otimes E_n:E_i\in\mathcal{B},\: 1\leq i\leq n\}
$
of the space of linear maps $(\CC^D)^{\otimes n}\to (\CC^D)^{\otimes n}$. The elements of $\mathcal{E}$ represent errors that a code tries to correct. The \textit{weight} $\text{wt}(E)$ of an error $E=E_1\otimes\dots\otimes E_n\in\mathcal{E}$ is equal to the number $|\{i:E_i\neq I\}|$ of single-qudit subsystems on which $E$ acts nontrivially.

\begin{definition}
    Let $\mathcal{C}\subset (\CC^D)^{\otimes n}$ be a $K$-dimensional subspace with orthogonal basis $\{\ket{\varphi_i}\}_{i=1}^K$. The subspace $\mathcal{C}$ is a \textit{quantum error correcting code} (or simply \textit{code}) with parameters $((n,K,d))_D$ if $\bra{\varphi_i}E\ket{\varphi_j}=c(E)\delta_{ij}$ whenever $E\in\mathcal{E}$ such that $\text{wt}(E)<d$, where $c(E)$ is a complex constant depending on $E$. A code $\mathcal{C}$ is \textit{pure} if $c(E)=0$ whenever $0<\text{wt}(E)<d$.
\end{definition}

If $D=2$, we drop the subscript and simply write $((n,K,d))$. The parameter $d$ is the \textit{distance} of the code and represents the amount of error that the code can correct: a code with distance $d\geq 2t+1$ can correct errors that affect up to $t$ subsystems.
By definition, any subspace of $(\CC^D)^{\otimes n}$ is a code of distance 1. We say that a code is \textit{nontrivial} if it has distance greater than 1.

The parameters $((n,K,d))_D$ of a code satisfy an inequality known as the quantum Singleton bound: $\log_D K\leq n-2(d-1)$ \cite[Theorem 13]{Huber2020}. Any code that saturates this bound is pure.

\begin{definition}
    A code with parameters $((n,K,d))_D$ such that $\log_D K= n-2(d-1)$ is called \textit{maximum distance separable} (MDS).
\end{definition}
\begin{proposition}[{\cite[Theorem 5]{Huber2020}}]\label{thm:QMDS-pure}
    Every MDS code is pure.
\end{proposition}
\begin{proposition}[{\cite[Observation 1]{Huber2020}}]\label{prop:obs}
A $K$-dimensional subspace $\mathcal{C}\subset (\CC^D)^{\otimes n}$ is a nontrivial pure code with parameters $((n,K,d))_D$ if and only if every element of $\mathcal{C}$ is $(d-1)$-uniform.
\end{proposition}

\subsection{New codes from old ones} We now present a method for generating new pure codes from old ones. The construction was described by Rains for qubit codes \cite{Rains1996QuantumWE} and later more generally for qudits by Huber and Grassl \cite{Huber2020}.

\begin{proposition}\label{prop:rains}
    Let $\{\ket{\varphi_i}\}_{i=1}^K$ be an orthogonal basis of a pure $((n,K,d))_D$ code with $n,d\geq 2$. The operator
    \[
    D\cdot\Tr_{\{1\}}(\ket{\varphi_1}\bra{\varphi_1}+\dots + \ket{\varphi_K}\bra{\varphi_K})
    \]
    is the projection onto a pure $((n-1,DK,d-1))_D$ code.
\end{proposition}
\begin{proof}
    See the proof of \cite[Theorem 2]{Huber2020}. Note that the scalar factor $D$ is necessary to make the operator a projection. In general, the trace of a projection is equal to the dimension of its image.
\end{proof}

Suppose $\ket{\varphi}\in (\CC^D)^{\otimes n}$ is an AME state where $n$ is even. By \Cref{prop:obs}, the span of $\ket{\varphi}$ is a $((n,1,\frac{n}{2}+1))_D$ code. Applying \Cref{prop:rains} iteratively, we obtain a family of MDS codes with the following parameters.
\[
((n,1,\frac{n}{2}+1))_D,\quad ((n-1,D,\frac{n}{2}))_D, \quad ((n-2,D^2,\frac{n}{2}-1))_D,\quad \dots
\]
This process cannot always be reversed: it could be the case that a $((n-i,D^i,\frac{n}{2}-i+1))_D$ MDS code exists but no $((n-i+1,D^{i-1},\frac{n}{2}-i+2))_D$ MDS code exists (see also \cite[Section 7]{Huber2020}). However, Huber and Grassl show that the construction can always be reversed at the top two levels of an MDS family arising from an AME state. That is, the existence of a $((n-1,D,\frac{n}{2}))_D$ code implies the existence of a $((n,1,\frac{n}{2}+1))_D$ code \cite[Proposition 7]{Huber2020}.

\section{Local symmetries of the 6-qubit AME state}\label{sec:three}

\subsection{MDS family from the 6-qubit AME state}\label{sec:main} This paper focuses on the MDS family generated by the 6-qubit AME state. Thus, the MDS family consists of quantum error correcting codes with parameters $((6,1,4))$, $((5,2,3))$, and $((4,4,2))$. Every code in this family is essentially unique; we state this fact as a proposition for future reference.

\begin{proposition}[Rains, \cite{746807}]\label{thm:rainsuniqueness}
    The MDS code with parameters $((6,1,4))$ is unique up to local unitary equivalence. The same thing is true for the MDS codes with parameters $((5,2,3))$ and $((4,4,2))$.
\end{proposition}

Although every normalized 6-qubit AME state belongs to a unique $\U_2^{\times 6}$-orbit, there are many known ways to construct a representative in this orbit \cites{Borras_2007,PhysRevA.97.062326,Helwig:2013qoq,782103,MHein,4qubit}. In this paper, we use the 6-qubit AME state
\begin{equation*}\label{eq:AME}
\ket{\Psi}:= 2e^{-\im \pi/4}(T^\dagger)^{\otimes 3}(\ket{000000}+\ket{010101}+\ket{101010}+\ket{111111}),
\end{equation*}
where $T$ is the $4\times 4$ unitary matrix
\begin{equation}\label{eq:T}
    T:=\frac{1}{\sqrt{2}}\begin{pmatrix}
        1 & 0 & 0 & 1\\
        0 & \im & \im & 0\\
        0 & -1 & 1 & 0\\
        \im & 0 & 0 & -\im
    \end{pmatrix}.
\end{equation}
By \Cref{prop:rains}, the image of $\Tr_{\{1\}}(\ket{\Psi}\bra{\Psi})$ is a $((5,2,3))$ code $\mathcal{C}_1$. Tensor contraction gives us a natural basis of $\mathcal{C}_1$ consisting of the vectors $\ket{\psi_i}=\bra{i}_1\ket{\Psi}$, as we can see from the calculation
\[
\Tr_{\{1\}}(\ket{\Psi}\bra{\Psi})=\Tr_{\{1\}}\Big(\sum_{i,j=0}^1\ket{i}\bra{j}\otimes\ket{\psi_i}\bra{\psi_j}\Big)=\sum_{i=0}^1 \ket{\psi_i}\bra{\psi_i}.
\]
To simplify notation, define the vectors
\begin{align}
\begin{split}
\ket{u_1} &:= \ket{0000}+\ket{1111}, \quad
\ket{u_2} := \ket{0011}+\ket{1100}, \\
\ket{u_3} &:= \ket{0101}+\ket{1010}, \quad
\ket{u_4} := \ket{0110}+\ket{1001}.
\end{split}
\end{align}
Then the basis elements $\ket{\psi_i}$ that span $\mathcal{C}_1$ can be written as
\begin{align*}
\ket{\psi_0} &:= \ket{0}(\ket{u_1}-\im\ket{u_2})+\ket{1}(\im\ket{u_3}+\ket{u_4}),\\
\ket{\psi_1} &:= \ket{0}(\ket{u_3}+\im\ket{u_4})+\ket{1}(-\im\ket{u_1}+\ket{u_2}).
\end{align*}
Similarly, the image of $\Tr_{\{1,2\}}(\ket{\Psi}\bra{\Psi})$ is a $((4,4,2))$ code $\mathcal{C}_2$. Tensor contraction gives us a basis of $\mathcal{C}_2$ consisting of the vectors $\ket{\psi_{ij}}=\bra{j}_1\ket{\psi_i}$, which can be written as
\begin{align*}
\ket{\psi_{00}} &:= \ket{u_1}-\im\ket{u_2}, \quad
\ket{\psi_{01}} := \im\ket{u_3}+\ket{u_4}, \\
\ket{\psi_{10}} &:= \ket{u_3}+\im\ket{u_4}, \quad
\ket{\psi_{11}} := -\im\ket{u_1}+\ket{u_2}.
\end{align*}
From the above, it is clear that $\{\ket{u_i}\}$ is also a basis of $\mathcal{C}_2$. It might be helpful to note that these codes are named so that $\mathcal{C}_i$ is $2^i$-dimensional.

\subsection{A unitary change of coordinates}\label{sec:unitarychange} The $4\times 4$ unitary matrix $T$ defined in \eqref{eq:T} has an important property: it is used to define the double cover
\[
\pi:\SL_2\times\SL_2\to\SO_4,\quad \pi(A,B)= T(A\otimes B)T^\dagger.
\]
The space $\CC^2\otimes \CC^2$ is naturally identified with $\CC^4$ by the correspondence of basis vectors $\ket{i}\ket{j}\mapsto \ket{2i+j}$ for $i,j\in\{ 0,1\}$. Consider the unitary map
\begin{equation}\label{eq:changec}
(\CC^2)^{\otimes 2m}\to (\CC^4)^{\otimes m},\quad\ket{\varphi}\mapsto T^{\otimes m}\ket{\varphi}.
\end{equation}
Under this change of coordinates,
\[
A_1\otimes B_1\otimes\dots\otimes A_m\otimes B_m\in \SL_2^{\otimes 2m}\mapsto T(A_1\otimes B_1)T^\dagger\otimes\dots\otimes T(A_m\otimes B_m)T^\dagger\in\SO_4^{\otimes m}.
\]
Therefore, the $\SL_2^{\times 2m}$-module $(\CC^2)^{\otimes 2m}$ is equivalent to the $\SO_4^{\times m}$-module $(\CC^4)^{\otimes m}$. Furthermore, we have
\begin{equation}\label{eq:unitary-unitary}
\pi(\SU_2\times \SU_2)=\SO_4(\CC)\cap\U_4=\SO_4(\RR).
\end{equation}
So the $\SU_2^{\times 2m}$-module $(\CC^2)^{\otimes 2m}$ is equivalent to the $\SO_4(\RR)^{\times m}$-module $(\CC^4)^{\otimes m}$ via \eqref{eq:changec}.

In previous work, the map \eqref{eq:changec} was used to develop an algorithm for finding unique normal forms of $n$-qubit states under the action of $\SL_2^{\times n}$ \cite{tensordecompositions}. We shall see that in the 4-qubit case, \eqref{eq:changec} gives us the embedding $(\CC^2)^{\otimes 4}\to\mathfrak{so}_8$.

\subsection{Construction of 3-uniform states}\label{sec:3uniform} The 6-qubit AME state $\ket{\Psi}$ introduced in \Cref{sec:main} is a member of an infinite family of $3$-uniform states. The general construction works as follows. We take the $m$-ququart generalized GHZ state \[\ket{GHZ(m)}:=\sum_{i=0}^3 \ket{i}^{\otimes m}\in (\CC^4)^{\otimes m}\]
then apply the inverse of \eqref{eq:changec}. The result is a 3-uniform $2m$-qubit state.
\begin{theorem}\label{thm:3uniform}
    The state $\ket{\Psi_m}=(T^\dagger)^{\otimes m}\ket{GHZ(m)}\in (\CC^2)^{\otimes 2m}$ is 3-uniform for all $m\geq 3$.
\end{theorem}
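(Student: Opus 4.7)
The plan is to reformulate $\ket{\Psi_m}$ using the observation that $T^\dagger$ is a unitary isomorphism $\CC^4 \to \CC^2\otimes\CC^2$, so
\[
\ket{\Psi_m} = \sum_{i=0}^3 \ket{\phi_i}^{\otimes m}, \quad\text{where}\quad \ket{\phi_i} = T^\dagger\ket{i}.
\]
Reading off the columns of $T^\dagger$ shows that $\{\ket{\phi_0},\ket{\phi_1},\ket{\phi_2},\ket{\phi_3}\}$ is, up to phases, the standard Bell basis of $\CC^2\otimes\CC^2$. I will need two features of this basis: (i) $\sum_i \ket{\phi_i}\bra{\phi_i} = I_4$ by orthonormality; and (ii) for each $i$ and for either of the two qubits $a$ of the pair, $\Tr_a\ket{\phi_i}\bra{\phi_i} = \tfrac12 I_2$, since every $\ket{\phi_i}$ is maximally entangled.

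Given any 3-qubit subset $S$ of the $2m$ qubits, let $s_k\in\{0,1,2\}$ denote the number of qubits $S$ draws from the $k$-th ququad, so $\sum_k s_k = 3$. The only possible shapes for the nonzero entries are $(2,1)$ and $(1,1,1)$. I expand
\[
\Tr_{\bar S}(\ket{\Psi_m}\bra{\Psi_m}) = \sum_{i,j}\bigotimes_{k=1}^m \Tr_{\bar S_k}\bigl(\ket{\phi_i}\bra{\phi_j}\bigr),
\]
and note that every ququad with $s_k = 0$ contributes the scalar factor $\braket{\phi_j|\phi_i} = \delta_{ij}$, collapsing the sum to its diagonal $i = j$. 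Whenever such an untouched ququad exists, I use (i) and (ii) to finish: the diagonal sum becomes $\sum_i(\tfrac12 I_2)^{\otimes 3} = \tfrac12 I_8$ in shape $(1,1,1)$, and $\tfrac12\bigl(\sum_i\ket{\phi_i}\bra{\phi_i}\bigr)\otimes I_2 = \tfrac12 I_8$ in shape $(2,1)$. An untouched ququad is automatic whenever $m\geq 4$, and for $m=3$ it is automatic in shape $(2,1)$.

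The remaining case, which I view as the main obstacle, is $m=3$ with shape $(1,1,1)$: every ququad is touched, so no $\delta_{ij}$ factor arises and the cross terms $i\neq j$ must be controlled by the internal structure of $\{\ket{\phi_i}\}$. Rather than grinding through the Pauli algebra, I would identify $\ket{\Psi_3}$ with the 6-qubit AME state $\ket{\Psi}$ of \Cref{sec:main}: under the identification of qubit pairs with ququads, the kets $\ket{000000},\ket{010101},\ket{101010},\ket{111111}$ correspond to $\ket{000},\ket{111},\ket{222},\ket{333}$ in $(\CC^4)^{\otimes 3}$, so the inner sum is precisely $\ket{GHZ(3)}$ and $\ket{\Psi_3}$ is a scalar multiple of $\ket{\Psi}$. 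Since $\ket{\Psi}$ spans a $((6,1,4))_2$ MDS code (\Cref{thm:rainsuniqueness}), it is $3$-uniform by \Cref{prop:obs}, so its 3-qubit reductions are proportional to the identity. This settles the last case and completes the proof for all $m\geq 3$.
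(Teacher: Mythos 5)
Your proposal reaches the conclusion by the same basic strategy as the paper --- reduce an arbitrary $3$-qubit marginal to a computation on at most three ququads and then fall back on the $m=3$ case --- but your execution is genuinely different and, in one respect, more careful. The paper traces out $m-3$ ququad factors and asserts that the partial trace of $\ket{GHZ(m)}\bra{GHZ(m)}$ is again $\ket{GHZ(3)}\bra{GHZ(3)}$; as literally stated this is not right (the reduction is the diagonal mixture $\sum_i(\ket{i}\bra{i})^{\otimes 3}$, not a pure GHZ projector), and your observation that an untouched ququad contributes $\braket{\phi_j|\phi_i}=\delta_{ij}$ and collapses the double sum to its diagonal is exactly the correct form of that step. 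Combined with your Bell-basis facts (i) and (ii), this disposes of every marginal except the shape $(1,1,1)$ marginals when $m=3$, and does so by elementary means, never requiring the cross terms to cancel. What your route buys is a transparent, self-contained proof for all $m\geq 4$ and for shape $(2,1)$; what it does not buy is the one remaining case, where the cross terms $i\neq j$ genuinely survive and must be shown to cancel.

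On that remaining case your citation is off: \Cref{thm:rainsuniqueness} asserts that the $((6,1,4))_2$ code is \emph{unique up to local unitary equivalence}, not that $\ket{\Psi}$ spans one, so it cannot be used to certify that $\ket{\Psi_3}$ is $3$-uniform --- that is precisely the $m=3$ instance of the statement being proved, and taking it for granted there would be circular. You still owe the same ``direct check'' the paper performs for $m=3$ (or an explicit appeal to the literature cited in \Cref{sec:main} establishing that this particular $\ket{\Psi}$ is AME). This is a sourcing defect rather than a structural one: once the $m=3$ computation is granted, everything else in your argument is sound.
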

\begin{proof}
    A direct check verifies the claim for $m=3$. Suppose $m>3$. Let $V_i\cong\CC^2$ for $1\leq i\leq 2m$ and let $W_j\cong\CC^4$ for $1\leq j\leq m$. We apply the convention that $V_{2j-1}\otimes V_{2j} = W_j$ for each $j$. For any 3-element subset $\{i_1,i_2,i_3\}\subset\{1,\dots,2m\}$ we can choose a subset $\{j_1,j_2,j_3\}\subset\{1,\dots,m\}$ such that $V_{i_1}\otimes V_{i_2}\otimes V_{i_3}\subset W_{j_1}\otimes W_{j_2}\otimes W_{j_3}$. Tracing out the complement $W$ of $W_{j_1}\otimes W_{j_2}\otimes W_{j_3}$ (that is, $W$ is the tensor product of copies of $\CC^4$ corresponding to the indices in the complement of $\{j_1,j_2,j_3\})$, we have
    \begin{align*}
    \Tr_{W}(\ket{\Psi_m}\bra{\Psi_m})&=\Tr_{W}((T^\dagger)^{\otimes m}\ket{GHZ(m)}\bra{GHZ(m)}T^{\otimes m}) \\
    &=(T^\dagger)^{\otimes 3}\sum_{i=0}^3 \ket{iii}\bra{iii}T^{\otimes 3}.
    \end{align*}
    To show that the reduction of $\ket{\Psi_m}\bra{\Psi_m}$ to $V_{i_1}\otimes V_{i_2}\otimes V_{i_3}$ is proportional to the identity, we check that
    \[
    \Tr_V\Big((T^\dagger)^{\otimes 3}\sum_{i=0}^3 \ket{iii}\bra{iii}T^{\otimes 3}\Big)\propto I
    \]
    for any $3$-qubit subsystem $V\subset (\CC^2)^{\otimes 6}$.
\end{proof}

\subsection{Local symmetries and transversal gates}\label{sec:stab} Let $\mathcal{C}$ be a subspace of $(\CC^D)^{\otimes n}$. We denote the stabilizer of $\mathcal{C}$ in $\SL_D^{\otimes n}$ with
\[
N(\mathcal{C}):=\{g\in \SL_D^{\otimes n}:g\mathcal{C}=\mathcal{C}\}.
\]
The rationale for our choice of notation should become clear in \Cref{sec:vinberg}. There exists a natural representation $\mu:N(\mathcal{C})\to \GL(\mathcal{C})$ defined by $\mu(g)\ket{\varphi}=g\ket{\varphi}$ for all $g\in\SL_D^{\otimes n}$ and $\ket{\varphi}\in\mathcal{C}$. A choice of basis for $\mathcal{C}$ gives coordinates to the image of $\mu$. That is, if $\{\ket{\varphi_i}\}_{i=1}^K$ is a basis of $\mathcal{C}$, then $\mu(g)$ is a $K\times K$ matrix with entries $\mu(g)_{ji}$ given by
\[
g\ket{\varphi_i}=\sum_j \mu(g)_{ji}\ket{\varphi_j}.
\]
We define the group $W(\mathcal{C}):=\mu(N(\mathcal{C}))$ to be the image of $\mu$.

Viewing $\mathcal{C}$ as a quantum error correcting code, a \textit{transversal gate} of $\mathcal{C}$ is an element $g\in N(\mathcal{C})$ such that $g\in\SU_D^{\otimes n}$. The transversal gates of a code are of interest because they can be used to apply operations on encoded states that do not propagate local errors \cite{PhysRevLett.102.110502}.

\begin{proposition}\label{prop:stab}
    Let $\ket{\varphi}\in (\CC^D)^{\otimes (n+1)}$ such that $\Tr_{\{1\}}(\ket{\varphi}\bra{\varphi})$ has rank $D$. Let $\mathcal{C}\subset(\CC^D)^{\otimes n}$ be the subspace spanned by
    \[\{\ket{\varphi_i}=\bra{i}_1\ket{\varphi}:0\leq i\leq D-1\}.\]
    This choice of basis gives coordinates to the image of the representation $\mu:N(\mathcal{C})\to\GL(\mathcal{C})$. Given $z\in\CC\setminus\{0\}$ and $g=g_0\otimes g_1\otimes\dots\otimes g_n$ with $g_i\in \SL_D$, the following are equivalent: 
    \begin{enumerate}
        \item[1.] $z g\ket{\varphi}=\ket{\varphi}$,
        \item[2.] $g_1\otimes\dots\otimes g_{n}\in N(\mathcal{C})$ and $\mu(g_1\otimes\dots\otimes g_{n})=(z g_0^\top)^{-1}$.
    \end{enumerate}
\end{proposition}
\begin{proof}
    First notice that $z g\ket{\varphi}=\ket{\varphi}$ if and only if \[(I\otimes g_1\otimes\dots\otimes g_{n})\ket{\varphi}=\frac{1}{z}(g_0^{-1}\otimes I\otimes\dots\otimes I)\ket{\varphi}.\] Since $\ket{\varphi}=\sum_{i=0}^{D-1} \ket{i}\ket{\varphi_i}$, the above expands to
    \begin{align*}
         \sum_{i=0}^{D-1}\ket{i}\otimes (g_1\otimes\dots\otimes g_{n})\ket{\varphi_i}&=\sum_{i=0}^{D-1}(z g_0)^{-1}\ket{i}\otimes\ket{\varphi_i} \\
         &= \sum_{i=0}^{D-1}\left(\sum_{j=0}^{D-1}(z g_0)^{-1}_{ji}\ket{j}\right)\otimes\ket{\varphi_i} \\
         &=\sum_{j=0}^{D-1}\ket{j}\otimes\left(\sum_{i=0}^{D-1} (z g_0)^{-1}_{ji}\ket{\varphi_i}\right) \\
         &=\sum_{i=0}^{D-1}\ket{i}\otimes\left(\sum_{j=0}^{D-1} (z g_0^\top)^{-1}_{ji}\ket{\varphi_j}\right),
    \end{align*}
    which is equivalent to
    $
(g_1,\dots,g_{n}).\ket{\varphi_i}=(zg_0^\top)^{-1}.\ket{\varphi_i}
    $
    for all $0\leq i\leq D-1$.
\end{proof}

Proposition~\ref{prop:stab} tells us that the group of \textit{local symmetries}
\[
S(\ket{\Psi}):=\{g\in \GL_2^{\otimes 6}:g\ket{\Psi}=\ket{\Psi}\}
\]
of $\ket{\Psi}$ and the group $N(\mathcal{C}_1)$ are, in a sense, the same object, as one determines the other. Furthermore, the groups $S(\ket{\Psi})$ and $N(\mathcal{C}_2)$ are closely related. To illustrate, in what follows we first calculate $N(\mathcal{C}_2)$ then apply Proposition~\ref{prop:stab} successively to find $S(\ket{\Psi})$ and $W(\mathcal{C}_1)$. We discover that all of these groups consist of local unitary operators, which implies that $N(\mathcal{C}_1)$ and $N(\mathcal{C}_2)$ are the groups of transversal gates of the codes $\mathcal{C}_1$ and $\mathcal{C}_2$, respectively. The caveat is that we need to switch back and forth between two choices of coordinates via the invertible map \eqref{eq:changec}.

\begin{proposition}\label{lem:lem1}
    Let $g_1,g_2\in\SO_4$. The following are equivalent:
    \begin{itemize}
        \item $g_1Dg_2^\top$ is diagonal whenever $D\in\CC^{4\times 4}$ is diagonal,
        \item $g_1=PD_1$ and $g_2=PD_2$ for some permutation matrix $P$ and diagonal matrices $D_1,D_2$ with diagonal entries in $\{\pm 1\}$.
    \end{itemize}
\end{proposition}
\begin{proof}
    Let $D=\text{diag}(\lambda_1,\dots,\lambda_4)$ such that the squared eigenvalues $\lambda^2_i$ are distinct. If $g_1Dg_2^\top$ is diagonal, so is $g_1D^2 g_1^\top$. Since conjugation preserves eigenvalues, we must have
    \[
    g_1 D^2 g_1^\top=\text{diag}(\lambda_{\sigma(1)}^2,\dots,\lambda_{\sigma(4)}^2)=P_\sigma D^2 P_\sigma^\top,
    \]
    where $P_\sigma$ is the permutation matrix corresponding to $\sigma\in\mathfrak{S}_4$. Then $P_\sigma^\top g_1$ commutes with $D^2$, so it preserves the eigenspaces of $D^2$. By choice of $D$, every eigenspace of $D^2$ is 1-dimensional. Then $P_\sigma^\top g_1$ is diagonal. We conclude that $g_1=P_\sigma D_1$ for some diagonal $D_1$. Since $P_{\sigma}\in\Or_4$ and $g_1\in \Or_4$, we have $D_1\in\Or_4$ and the diagonal entries of $D_1$ must be equal to $\pm 1$. The same argument applies to $g_2$. This proves one direction of the claim. The converse is clear.
\end{proof}

\begin{proposition}\label{prop:diag_code}
    The group $N(\mathcal{C}_2)$ is a subgroup of $\SU_2^{\otimes 4}$. Hence, $N(\mathcal{C}_2)$ is the set of transversal gates of $\mathcal{C}_2$.
\end{proposition}
\begin{proof}
    The $\SO_4^{\times 2}$-modules $\CC^{4\times 4}$ and $\CC^4\otimes\CC^4$ are equivalent by the invertible map $\CC^{4\times 4}\to \CC^4\otimes\CC^4$ defined by $\ket{i}\bra{j}\mapsto\ket{ij}$. The group action on $\CC^{4\times 4}$ is given by $(g_1,g_2).A=g_1Ag_2^\top$ for $g_1,g_2\in \SO_4$. Moreover, the $\SO_4^{\times 2}$-module $\CC^4\otimes\CC^4$ is equivalent to the $\SL_2^{\times 2}$-module $(\CC^2)^{\otimes 4}$ by the invertible map \eqref{eq:changec}. Therefore, the $\SO_4^{\times 2}$-module $\CC^{4\times 4}$ is equivalent to the $\SL_2^{\times 2}$-module $(\CC^2)^{\otimes 4}$. Likewise, the subspace $\mathcal{D}\subset \CC^{4\times 4}$ of diagonal matrices corresponds to the code $\mathcal{C}_2$. Indeed, a calculation gives
    \begin{align*}
    T^\dagger\otimes T^\dagger \ket{00}&=\frac{1}{2}(\ket{u_1}+\ket{u_2}), \\
    T^\dagger\otimes T^\dagger \ket{11}&=\frac{1}{2}(-\ket{u_3}-\ket{u_4}), \\
    T^\dagger\otimes T^\dagger \ket{22}&=\frac{1}{2}(\ket{u_3}-\ket{u_4}), \\
    T^\dagger\otimes T^\dagger \ket{33}&=\frac{1}{2}(-\ket{u_1}+\ket{u_2}).
    \end{align*}
    By \Cref{lem:lem1}, if $g\in\SO_4^{\times 2}$ such that $g.\mathcal{D}=\mathcal{D}$, then $g\in\SO_4(\RR)^{\times 2}$. Then, by \eqref{eq:unitary-unitary}, $N(\mathcal{C}_2)\leq \SU_2^{\otimes 4}$.
\end{proof}

\begin{theorem}[Local symmetries of $\ket{\Psi}$]\label{thm:GHZ3}
    Let $g\in\SO_4^{\otimes 3}$ and $z\in\CC\setminus\{0\}$. The following are equivalent:
    \begin{itemize}
        \item[1.] $zg\ket{GHZ(3)}=\ket{GHZ(3)}$,
        \item[2.] $zg=PD_1\otimes PD_2\otimes PD_3$ where $P\in \SO_4$ is a permutation matrix, each $D_i\in \SO_4$ is diagonal, and $D_1 D_2 D_3=I$.
    \end{itemize}
    If the above holds, then $z\in\{\pm 1\}$ and $zg\in \SO_4^{\otimes 3}$.
\end{theorem}
\begin{proof}
    Let $g=g_1\otimes g_2\otimes g_3$ with each $g_i\in \SO_4$ and suppose $zg\ket{GHZ(3)}=\ket{GHZ(3)}$. Let $\mathcal{C}\subset\CC^4\otimes\CC^4$ be the subspace spanned by the vectors $\bra{i}_1\ket{GHZ(3)}=\ket{ii}$ for $0\leq i\leq 3$. Then, by \Cref{prop:stab}, $g_2\otimes g_3\in N(\mathcal{C})$. By the equivalence of $\SO_4^{\times 2}$-modules $\CC^4\otimes \CC^4$ and $\CC^{4\times 4}$, \Cref{lem:lem1} implies that $g_2=PD_2$ and $g_3=PD_3$ for some permutation matrix $P$ and diagonal matrices $D_2,D_3$. By symmetry of $\ket{GHZ(3)}$, $g_1=PD_1$ for some diagonal $D_1$. Let $\lambda_i^j$ denote the $i$th diagonal entry of $D_j$ and let $\sigma\in\mathfrak{S}_4$ be the permutation corresponding to $P^{-1}$. Then $zg\ket{GHZ(3)}=\ket{GHZ(3)}$ implies that
    \begin{align*}
        \frac{1}{z}\sum_{i=0}^3 \ket{iii} &= g\sum_i \ket{iii} \\
        &= P^{\otimes 3}\sum_{i=0}^3 \lambda_i^{1}\lambda_i^{2}\lambda_i^{3}\ket{iii} \\
        &= \sum_{i=0}^3\lambda_{\sigma(i)}^{1}\lambda_{\sigma(i)}^{2}\lambda_{\sigma(i)}^{3}\ket{iii}.
    \end{align*}
    We conclude that $zD_1 D_2 D_3=I$ or $D_1' D'_2 D'_3=I$ with $D_1'=zD_1$, $D_2'=D_2$, and $D_3'=D_3$. Since each $\lambda_i^j\in\{\pm 1\}$, we have $z\in\{\pm1\}$. Since each $g_i\in \SO_4$, we have $1=\det(g_i)=\det(P)\det(D_i)$ or $\det(D_i')=\det(D_i)=1/\det(P)$. Then
    $
    \det^3(D'_i)=\det(D'_1 D'_2 D'_3)=\det(I)=1,
    $
    which implies that $\det(D'_i)=1$ and $D'_i\in \SO_4$. It follows that $P\in \SO_4$ also. This proves one direction of the claim. The converse is clear.
\end{proof}
Note that every $h\in \GL_2^{\otimes 6}$ has the form $h=zh'$ for some $h'\in \SL_2^{\otimes 6}$ and $z\in\CC\setminus\{0\}$. Since $\ket{\Psi}\propto(T^\dagger)^{\otimes 3}\ket{GHZ(3)},$ \Cref{thm:GHZ3} gives us the local symmetries $S(\ket{\Psi})$ after applying a change of coordinates. The last sentence of \Cref{thm:GHZ3} emphasizes that $S(\ket{\Psi})\leq \SL_2^{\otimes 6}$.

\begin{table}
\begin{center}
\begin{tabular}{ c | c }
\toprule
$U$ & $T^\dagger UT$ \\
\midrule\\
\addlinespace[-2ex]
$\begin{pmatrix}  0 & 0 & 1 & 0 \\ 1 & 0 & 0 & 0 \\ 0 & 1 & 0 & 0 \\ 0 & 0 & 0 & 1 \end{pmatrix}$ &
$\frac{1}{2}\begin{pmatrix}
1-\im & 1-\im \\
-1-\im & 1+\im
\end{pmatrix}
\otimes
\frac{1}{2}\begin{pmatrix}
    1+\im & -1-\im \\
    1-\im & 1-\im
\end{pmatrix}$\\

\addlinespace[0.75ex]
\midrule
\addlinespace[0.75ex]

$\begin{pmatrix}      0 & 1 & 0 & 0 \\
    1 & 0 & 0 & 0 \\
    0 & 0 & 0 & 1 \\
    0 & 0 & 1 & 0 \end{pmatrix}$ &
$\begin{pmatrix}
    \im & 0\\
    0 & -\im
\end{pmatrix}
\otimes
\begin{pmatrix}
    0 & 1\\
    -1 & 0
\end{pmatrix}$\\

\addlinespace[0.75ex]
\midrule
\addlinespace[0.75ex]

$\begin{pmatrix}  1 & 0 & 0 & 0 \\
    0 & 1 & 0 & 0 \\
    0 & 0 & -1 & 0 \\
    0 & 0 & 0 & -1 \end{pmatrix}$ &
$\begin{pmatrix}
    0 & \im\\
    \im & 0
\end{pmatrix}
\otimes
\begin{pmatrix}
    0 & \im\\
    \im & 0
\end{pmatrix}$\\
\bottomrule
\end{tabular}
\end{center}
\bigskip
\caption{Generators of the subgroup of elements in $W(\mathfrak{so}_8)$ with positive determinant as tensor products of matrices in $\SU_2$ under the change of coordinates associated to $T^\dagger$.}\label{table1}
\end{table}

\begin{proposition}\label{prop:trans}
    The group $N(\mathcal{C}_1)$ is a subgroup of $\SU_2^{\otimes 5}$. Hence, $N(\mathcal{C}_1)$ is the set of transversal gates of $\mathcal{C}_1$.
\end{proposition}
\begin{proof}
    Let $g_1\otimes\dots\otimes g_5\in N(\mathcal{C}_1)$ with $g_i\in \SL_2$. By \Cref{prop:stab}, $g\ket{\Psi}=\ket{\Psi}$, where $g=g_0\otimes g_1\otimes\dots\otimes g_5$ and $\mu(g_1\otimes\dots\otimes g_5)=(g_0^{-1})^\top$. Since $T^{\otimes 3}\ket{\Psi}$ is a scalar multiple of $\ket{GHZ(3)}$, we can apply \Cref{thm:GHZ3} to conclude that $T(g_i\otimes g_{i+1})T^\dagger\in \SO_4(\RR)$ for each $i\in\{0,2,4\}$. By \eqref{eq:unitary-unitary}, we have $g_i\in \SU_2$ for all $i$.
\end{proof}

\begin{proposition}\label{thm:grp}
    The group $W(\mathcal{C}_1)$ is generated by $\im X, \im Z,$ and $H$, with $H$ defined as in \Cref{thm:mainresult}.
\end{proposition}
\begin{proof}
    Let $g\in \SL_2^{\otimes 6}$ and $z\in\CC\setminus\{0\}$ such that $zg\ket{\Psi}=\ket{\Psi}$.
    By \Cref{thm:GHZ3}, $zg=g_1\otimes g_2\otimes\dots\otimes g_6$ such that $g_i\in \SL_2$ and $T(g_1\otimes g_2)T^\dagger=PD$ for some permutation matrix $P\in \SO_4$ and diagonal $D\in\SO_4$. The set of all $PD$ of this form is a group generated by the three matrices in the left column of \Cref{table1}; although not relevant to the proof, this group consists of all elements $w\in W(\mathfrak{so}_8)$ in the Weyl group of $\mathfrak{so}_8$ such that $\det(w)>0$. The right column of \Cref{table1} shows, for each generator $U$, how $T^\dagger U T$ decomposes as a Kronecker product of matrices in $\SU_2$. Applying \Cref{prop:stab}, we conclude that $W(\mathcal{C}_1)$ is generated by
    \[
    \im X=\begin{pmatrix}
        0 & \im \\
        \im & 0
    \end{pmatrix}, \quad \im Z=\begin{pmatrix}
        \im & 0 \\
        0 & -\im
    \end{pmatrix}, \quad \text{and}\quad
    H=\frac{1}{2}\begin{pmatrix}
        1+\im & 1+\im \\
        -1+\im & 1-\im
    \end{pmatrix}.
    \]
    These generators come from taking the inverse transpose of the left factors of the matrices that appear in the right column of \Cref{table1}.
\end{proof}

\subsection{The binary tetrahedral group}\label{sec:24} The generators of $W(\mathcal{C}_1)$ given by \Cref{thm:grp} are elements of $\SU_2$. Applying the standard double cover $\SU_2\to \SO_3(\RR)$, the generators map to the following:
\[
\im X \mapsto \begin{pmatrix}
    1 & 0 & 0 \\
    0 & -1 & 0 \\
    0 & 0 & -1
\end{pmatrix},\quad 
\im Z \mapsto \begin{pmatrix}
    -1 & 0 & 0 \\
    0 & -1 & 0 \\
    0 & 0 & 1
\end{pmatrix},\quad 
H \mapsto \begin{pmatrix}
    0 & 0 & 1 \\
    1 & 0 & 0\\
    0 & 1 & 0
\end{pmatrix}
\]
These $3\times 3$ matrices generate the tetrahedral group, which represents the rotational symmetries of the tetrahedron with vertices $(1,1,1)$, $(1,-1,-1)$, $(-1,1,-1)$, and $(-1,-1,1)$. The tetrahedral group has order 12; it acts as even permutations on the set of vertices and is isomorphic to the alternating group on 4 elements. Since $W(\mathcal{C}_1)$ contains $-I$, it is the double cover of the tetrahedral group, a group of order 24 known as the \textit{binary tetrahedral group}. Thus, by \Cref{prop:trans}, we recover the known fact that the transversal gates of $\mathcal{C}_1$ implement the binary tetrahedral group \cite{PhysRevLett.133.240603}.

\section{Vinberg's theory and some consequences}\label{sec:vinberg}

\subsection{A graded Lie algebra}\label{sec:graded}
Let $\mathfrak{g}:=\mathfrak{so}_{8}$ be the complex Lie algebra consisting of $8\times 8$ skew-symmetric matrices with the Lie bracket given by the commutator $[M,N]=MN-NM$ for $M,N\in\mathfrak{g}$. Let $G=\SO_{8}$ be the corresponding connected Lie group. The adjoint representation $\text{ad}:\mathfrak{g}\to\text{End}(\mathfrak{g})$ of the Lie algebra is defined by the action $M.N=[M,N]$. The adjoint representation $\text{Ad}:G\to \GL(\mathfrak{g})$ of the Lie group is defined by conjugation: $g.M=gMg^\top$, where $g\in G$. Up to a scalar multiple, the Killing form $\mathfrak{g}\times\mathfrak{g}\to\CC$ is given by $(M,N)\mapsto\Tr(MN)$. Note the invariance of the Killing form:
\begin{equation}\label{eq:killing}
\text{tr}([K,M]N)=\text{tr}(M[N,K]),\quad\text{for all $K,M,N\in\mathfrak{g}$.}
\end{equation}
Each element of $\mathfrak{g}$ can be viewed as a $2\times 2$ block matrix with $4\times 4$ blocks. From this we see that the space decomposes as $\mathfrak{g}=\mathfrak{g}_0\oplus\mathfrak{g}_1$, where
\begin{align*}
\mathfrak{g}_0:=\left\{\begin{pmatrix}
    E_1 & 0 \\
    0 & E_2
\end{pmatrix}: E_1,E_2\in\mathfrak{so}_4\right\}\quad\text{and}\quad
\mathfrak{g}_1 := \left\{\begin{pmatrix}
    0 & A \\
    -A^\top & 0
\end{pmatrix}: A\in\CC^{4\times 4} \right\}.
\end{align*}
This is in fact a $\mathbb{Z}_2$-grading of the Lie algebra:
$
[\mathfrak{g}_i,\mathfrak{g}_j]\subset\mathfrak{g}_{i+j \text{ (mod 2)}}.
$
In particular, we have $[\mathfrak{g}_0,\mathfrak{g}_1]\subset\mathfrak{g}_1$ so that the adjoint action induces a representation
\begin{equation*}\label{eq:ad}
\mathfrak{g}_0\to\text{End}(\mathfrak{g}_1),\quad E\mapsto \text{ad}(E)|_{\mathfrak{g}_1}.
\end{equation*}
Let $G_0:=\exp(\mathfrak{g}_0)\cong \SO_4^{\times 2}$ be the subgroup of $G$ corresponding to $\mathfrak{g}_0$. The Lie algebra representation above corresponds to the Lie group representation
\begin{equation}
\xi:G_0\to \GL(\mathfrak{g}_1),\quad g\mapsto \text{Ad}(g)|_{\mathfrak{g}_1}.
\end{equation}
Let $L:=\xi(G_0)$ be the image of this representation. We are interested in the grade-1 component $\mathfrak{g}_1$ as an $L$-module. Consider the invertible map
\[
\CC^{4\times 4}\to\mathfrak{g}_1,\quad
A\mapsto M_A :=\begin{pmatrix}
    0 & A \\
    -A^\top & 0
\end{pmatrix}.
\]
Given $A\in\CC^{4\times 4}$ and $h_1,h_2\in\SO_4$, we have
\[
\text{Ad}\begin{pmatrix}
    h_1 & 0 \\ 0 & h_2
\end{pmatrix}M_A=M_{h_1Ah_2^\top}.
\]
Thus, the corresponding $\SO_4^{\times 2}$-action on $\CC^{4\times 4}$ is given by $(h_1,h_2).A=h_1 Ah_2^\top$ for $h_1,h_2\in\SO_4$. Combining this observation with the discussion from \Cref{sec:unitarychange}, we see that the following are equivalent:
\begin{itemize}
    \item[1.] the $L$-module $\mathfrak{g}_1$,
    \item[2.] the $\SO_4^{\times 2}$-module $\CC^{4\times 4}$,
    \item[3.] the $\SO_4^{\times 2}$-module $\CC^4\otimes\CC^4$,
    \item[4.] the $\SL_2^{\times 4}$-module $(\CC^2)^{\otimes 4}$.
\end{itemize}
We get an embedding $(\CC^2)^{\otimes 4}\to \mathfrak{g}$ by composing the invertible maps between consecutive items of this list.

\subsection{Cartan subspaces}\label{sec:cartansubspaces}
Recall that an element $M\in\mathfrak{g}$ is \textit{semisimple} if $\text{ad}(M)$ is diagonalizable. Let $\mathfrak{a}$ be a subspace of $\mathfrak{g}_1$ with the following properties:
\begin{itemize}
    \item[1.] $\mathfrak{a}$ is \textit{abelian}, that is, $[\mathfrak{a},\mathfrak{a}]=0$, and
    \item[2.] every $M\in\mathfrak{a}$ is semisimple.
\end{itemize}
If $\mathfrak{a}$ is maximal among all subspaces of $\mathfrak{g}_1$ with these properties, then $\mathfrak{a}$ is a \textit{Cartan subspace} of $\mathfrak{g}_1$.
Any two Cartan subspaces are conjugate (\Cref{prop:conjugacy}), but our preferred one is
\[
\mathfrak{h}:=\{M_A:\text{$A\in\CC^{4\times 4}$ is diagonal}\}.
\]
Note that $\mathfrak{h}$ is also a Cartan subalgebra of $\mathfrak{g}$ in the usual Lie-theoretical sense. From the proof of Proposition~\ref{prop:diag_code}, we see that $\mathfrak{h}\subset\mathfrak{g}_1$ corresponds to the code $\mathcal{C}_2\subset (\CC^2)^{\otimes 4}$.

\begin{proposition}[{\cite[Proposition 3]{Vinberg_1976}}]\label{prop:semiclosed}
    An element $M\in\mathfrak{g}_1$ is semisimple if and only if its $L$-orbit is closed.
\end{proposition}

\begin{proposition}[{\cite[Theorem 1]{Vinberg_1976}}]\label{prop:conjugacy}
    If $\mathfrak{a}\subset\mathfrak{g}_1$ is a Cartan subspace, then there exists $g\in L$ such that $g\mathfrak{a}=\mathfrak{h}$.
\end{proposition}

If $\mathfrak{a}\subset\mathfrak{g}_1$ is an abelian subspace consisting of semisimple elements, then there exists $g\in L$ such that $g\mathfrak{a}\subset \mathfrak{h}$. Indeed, any such $\mathfrak{a}$ is contained in one that is maximal: a Cartan subspace. The claim then follows from \Cref{prop:conjugacy}. We end this section with a technical lemma about these subspaces of Cartan subspaces.

\begin{lemma}\label{lem:stab}
    Let $\mathfrak{a}$ be an abelian subspace of $\mathfrak{g}_1$ consisting of semisimple elements. Almost every $M\in\mathfrak{a}$ has the following property: if $g\in L$ such that $gM=M$, then $gN=N$ for all $N\in\mathfrak{a}$.
\end{lemma}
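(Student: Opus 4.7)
The plan is to simultaneously diagonalize the commuting family $\{\operatorname{ad}(N): N\in\mathfrak{a}\}$ on $\mathfrak{g}_0$, read off the centralizer of a generic $M\in\mathfrak{a}$ from the resulting weight decomposition, and then upgrade the Lie-algebra equality to a Lie-group equality.

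Since $\mathfrak{a}$ is abelian and consists of semisimple elements, the operators $\operatorname{ad}(N)|_{\mathfrak{g}_0}$ form a commuting family of diagonalizable endomorphisms. They admit a simultaneous eigenspace decomposition
\[
\mathfrak{g}_0 \;=\; \bigoplus_{\alpha\in\mathfrak{a}^*}(\mathfrak{g}_0)_\alpha,\qquad (\mathfrak{g}_0)_\alpha \;=\; \{E\in\mathfrak{g}_0 : [N,E]=\alpha(N)E\text{ for all }N\in\mathfrak{a}\},
\]
with only finitely many nonzero summands. The zero-weight piece $\mathfrak{z}:=(\mathfrak{g}_0)_0$ is the centralizer of $\mathfrak{a}$ in $\mathfrak{g}_0$ and equals $\operatorname{Lie}(Z)$, where $Z:=\{g\in G_0 : g.N=N\text{ for all }N\in\mathfrak{a}\}$ is the pointwise stabilizer.

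Let $\alpha_1,\ldots,\alpha_k$ be the finitely many nonzero weights for which $(\mathfrak{g}_0)_{\alpha_i}\neq 0$. The set $U:=\{M\in\mathfrak{a}:\alpha_i(M)\neq 0\text{ for all }i\}$ is the complement in $\mathfrak{a}$ of a finite union of hyperplanes, hence Zariski-open and dense. For $M\in U$,
\[
\mathfrak{g}_0^M \;=\; \ker\bigl(\operatorname{ad}(M)|_{\mathfrak{g}_0}\bigr) \;=\; \bigoplus_{\alpha(M)=0}(\mathfrak{g}_0)_\alpha \;=\; \mathfrak{z},
\]
so $\operatorname{Lie}(G_0^M)=\operatorname{Lie}(Z)$ and therefore the identity components agree: $(G_0^M)^0=Z^0$.

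To promote this to the full equality $G_0^M=Z$, note that $Z\subseteq G_0^M$ always, and with matching Lie algebras, $Z$ is a union of connected components of $G_0^M$, so $G_0^M/Z$ is a finite group. After using \Cref{prop:conjugacy} to conjugate $\mathfrak{a}$ into the Cartan subspace $\mathfrak{h}$, this finite quotient embeds into the Vinberg Weyl group $W=N_{G_0}(\mathfrak{h})/Z_{G_0}(\mathfrak{h})$, and for each nontrivial $w\in W$ the fixed set $\mathfrak{a}^w\subset\mathfrak{a}$ is a proper linear subspace. Removing these finitely many proper subspaces from $U$ yields a Zariski-open dense subset on which $G_0^M=Z$ exactly. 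The main obstacle is precisely this last step: passing from equal Lie algebras to equal groups. A priori, $G_0^M$ could have extra components, but the saving point is that the discrete symmetries they detect must act on $\mathfrak{a}$ through a finite Weyl-type group, whose nonidentity elements have proper fixed-point sets in $\mathfrak{a}$.
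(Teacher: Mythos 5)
There is a genuine gap, and it sits exactly where you flagged it: the passage from equal Lie algebras to equal groups. But first, a smaller real error in the setup. For $N\in\mathfrak{a}\subset\mathfrak{g}_1$ the operator $\operatorname{ad}(N)$ maps $\mathfrak{g}_0$ into $\mathfrak{g}_1$ (since $[\mathfrak{g}_1,\mathfrak{g}_0]\subset\mathfrak{g}_1$), so $\operatorname{ad}(N)|_{\mathfrak{g}_0}$ is not an endomorphism of $\mathfrak{g}_0$, and with your definition $(\mathfrak{g}_0)_\alpha=0$ for every $\alpha\neq 0$: the claimed decomposition of $\mathfrak{g}_0$ collapses. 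The conclusion you want from this step --- that $\ker\bigl(\operatorname{ad}(M)|_{\mathfrak{g}_0}\bigr)$ equals the centralizer of $\mathfrak{a}$ in $\mathfrak{g}_0$ for $M$ off finitely many hyperplanes --- is correct, but you must take the joint weight decomposition of all of $\mathfrak{g}$ (whose nonzero weight spaces are ``diagonal'' with respect to the $\ZZ_2$-grading) and then decompose an element of $\mathfrak{g}_0$ along it.

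The serious problem is the component step. You assert that, after conjugating $\mathfrak{a}$ into $\mathfrak{h}$, the finite group $G_0^M/Z$ embeds into $W=N_{G_0}(\mathfrak{h})/Z_{G_0}(\mathfrak{h})$. That presupposes every $g\in G_0^M$ normalizes $\mathfrak{h}$ (or at least preserves $\mathfrak{a}$), which is available only when $M$ is regular and lies in a unique Cartan subspace; the lemma, however, must cover arbitrary abelian semisimple $\mathfrak{a}$, whose generic elements are typically not regular. Concretely, let $\mathfrak{a}$ be spanned by $M_I$ and $M_{\diag(1,1,-1,-1)}$. A generic $M=M_B$ with $B=\diag(a,a,b,b)$ has stabilizer $\{(U,U)\}$ with $U$ ranging over block-diagonal orthogonal matrices, and a nontrivial rotation in the $e_1,e_2$-plane does not normalize $\mathfrak{h}$, so there is no homomorphism $G_0^M\to W$ to work with. (Those elements happen to lie in $Z$, so the lemma is true here, but your mechanism cannot detect that; it is not even clear that $Z$ is normal in $G_0^M$.) Relatedly, the claim that every nontrivial $w\in W$ has proper fixed set in $\mathfrak{a}$ is false when $\mathfrak{a}\subsetneq\mathfrak{h}$: the transposition swapping the first two diagonal slots fixes this $\mathfrak{a}$ pointwise. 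The paper avoids all of this by computing the full stabilizer directly: it conjugates $\mathfrak{a}$ into the diagonal matrices, groups the diagonal entries that coincide as polynomials on $\mathfrak{a}$, and shows that for generic $M$ any $(U_1,U_2)$ fixing $M$ is block diagonal adapted to that grouping with $U_1=U_2$, hence fixes $\mathfrak{a}$ pointwise --- identity component and discrete part are handled in one stroke. To salvage your route you would need an independent argument that the extra components of $G_0^M$ preserve $\mathfrak{a}$, which is essentially the content of the lemma itself.
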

\begin{proof}
We may apply an element of $L$ and assume that $\mathfrak{a}\subset\mathfrak{h}$, by \Cref{prop:conjugacy}. The subspace $\mathfrak{a}$ corresponds to a subspace $\mathcal{D}\subset\CC^{4\times 4}$ consisting of diagonal matrices via the equivalence $\mathfrak{g}_1\cong\CC^{4\times 4}$ of $\SO_4^{\times 2}$-modules. We will prove the corresponding statement in $\CC^{4\times 4}$.
We may also replace $\mathcal{D}$ with $h.\mathcal{D}$, where $h\in \Or_4^{\times 2}$. To see this, suppose that $h.A\in h.\mathcal{D}$ has the property that $gh.A=h.A$ implies $gh.\mathcal{D}=h.\mathcal{D}$ for $g\in\SO_4^{\times 2}$. Then $h^{-1}gh.A=A$ implies $h^{-1}gh.\mathcal{D}=\mathcal{D}$. Since $\SO_4^{\times 2}$ is a normal subgroup of $\Or_4^{\times 2}$, the map $g\mapsto h^{-1}gh$ is an automorphism of $\SO_4^{\times 2}$.

Now, let $\{A^i:1\leq i\leq K\}$ be a basis of $\mathcal{D}$, where $A^i=\text{diag}(\lambda^i_1,\lambda^i_2,\lambda^i_3,\lambda^i_4)$. A parameterization of $\mathcal{D}$ is given by
\[
A=\sum_{i=1}^K z_iA^{i}=\text{diag}\Big(\sum_{i=1}^K\lambda^{i}_1z_i,\dots,\sum_{i=1}^K\lambda^{i}_4z_i\Big),
\]
where the $z_i$ are indeterminates. An element of $\mathcal{D}$ is obtained by evaluating $A$ at some $z\in\CC^K$. Conjugating by a permutation matrix in $\Or_4$, we can permute entries and thus assume that $A$ has a block diagonal form
\begin{equation}\label{eq:diagparam}
A = f_1(z) I_{m_1\times m_1}\oplus\dots\oplus f_s(z) I_{m_s\times m_s}
\end{equation}
for some distinct $f_i\in \CC[z_1,\dots,z_K]$ and positive integers $m_i$ such that $\sum_i m_i=4$. We may also assume that $f_i^2\neq f_j^2$ or $f_i\neq -f_j$ whenever $i\neq j$. If not, apply an element of $\Or_4^{\times 2}$ to flip the signs of the appropriate diagonal entries. Let $B=A|_v$ for some $v\in\CC^K$ to be determined. Suppose $g_1,g_2\in \SO_4$ such that $g_1 B g_2^\top=B$. We complete the proof by showing that $g_1 Dg_2^\top=D$ for all $D\in\mathcal{D}$. To this end, note that $g_1 B^2 g_1^\top = B^2$ and
\begin{equation}\label{eq:diagparam2}
B^2 = f^2_1(v) I_{m_1\times m_1}\oplus\dots\oplus f^2_s(v) I_{m_s\times m_s}.
\end{equation}
Since $g_1$ commutes with $B^2$, $g_1$ preserves the eigenspaces of $B^2$. For almost all $v$, the values $f^2_1(v),\dots,f^2_s(v)$ are distinct, which implies that the eigenspaces of $B^2$ correspond to the blocks of \eqref{eq:diagparam2}. Therefore, $g_1\in \SO_{m_1}\oplus\dots\oplus\SO_{m_s}$ is block diagonal with the same shape as \eqref{eq:diagparam2}. A similar argument applies for $g_2$. Then $g_1 Bg_2^\top=B$ implies that $g_1=g_2$. Thus, $(g_1,g_2)$ fixes \eqref{eq:diagparam} as a matrix of polynomials. That is, $(g_1,g_2)$ fixes every $D\in\mathcal{D}$.
\end{proof}

\subsection{Weyl group}\label{sec:weyl} In Vinberg's theory, the subgroup
$
N(\mathfrak{h}):=\{g\in L: g\mathfrak{h}=\mathfrak{h}\}
$
is known as the \textit{normalizer} of the Cartan subspace $\mathfrak{h}$. There exists a natural representation $\mu:N(\mathfrak{h})\to \GL(\mathfrak{h})$ defined by $\mu(g)M=gM$ for all $g\in N(\mathfrak{h})$ and $M\in\mathfrak{h}$. The \textit{Weyl group} of $\mathfrak{h}$ is defined as $W(\mathfrak{h}):=\mu(N(\mathfrak{h}))$.

In the basis $\{M_{\ket{i}\bra{i}}\}$ of $\mathfrak{h}$, the elements of $W(\mathfrak{h})$ are precisely the matrices of the form $PB$, where $P\in \Or_4$ is a permutation matrix and $D\in \SO_4$ is a special orthogonal diagonal matrix (for example, see \cite[Exercise~3, p.~179]{WallachGIT}). This group is also the Weyl group $W(\mathfrak{so_8})$ of the Lie algebra $\mathfrak{so}_8$ in the usual sense of Lie theory. Combining this with \Cref{prop:diag_code}, we see that the transversal gates of $\mathcal{C}_2$ implement $W(\mathfrak{so_8})$. The elements of the basis of $\mathcal{C}_2$ corresponding to each $M_{\ket{i}\bra{i}}$ are written in the proof of Proposition~\ref{prop:diag_code}.

The following is analogous to the Chevalley restriction theorem:
\begin{proposition}[{\cite[Theorem 7]{Vinberg_1976}}]\label{thm:isomorphism}
    The restriction map $f\mapsto f|_{\mathfrak{h}}$ for $f\in \CC[\mathfrak{g}]$ induces an isomorphism of invariant algebras $\CC[\mathfrak{g}]^L\cong \CC[\mathfrak{h}]^{W(\mathfrak{h})}$.
\end{proposition}

\subsection{Critical points} We say that $M\in\mathfrak{g}_1$ is \textit{critical} if the corresponding state in $(\CC^2)^{\otimes 4}$ is critical. A subspace $\mathfrak{a}\subset\mathfrak{g}_1$ is \textit{critical} if every $M\in\mathfrak{a}$ is critical. By \Cref{prop:obs}, a subspace in $\mathfrak{g}_1$ is critical if and only if it corresponds to a nontrivial pure code in $(\CC^2)^{\otimes 4}$. In particular, the Cartan subspace $\mathfrak{h}$ is critical and corresponds to the $((4,4,2))$ code $\mathcal{C}_2\subset(\CC^2)^{\otimes 4}$.

Note that $M\in\mathfrak{g}_1$ is critical if and only if
\begin{equation}\label{eq:altcritical}
    \text{tr}([E,M]M^\dagger)=0,\quad \forall E\in \mathfrak{g}_0.
\end{equation}
This follows from translating the characterization of critical points given by \Cref{prop:equiv}, which says that $x\in (\CC^2)^{\otimes 4}$ is critical if and only if $\langle Ex,x\rangle=0$ for all $E\in \text{Lie}(\SL_2^{\otimes 4})$, where angle brackets indicate the Hermitian inner product.
Applying the change of coordinates $x'=(T\otimes T)x$, the condition becomes $\langle Ex',x'\rangle=0$ for all $E\in\text{Lie}(\SO_4^{\otimes 2})$. Every $E\in\text{Lie}(\SO_4^{\otimes 2})$ has the form $E=E_1\otimes I+I\otimes E_2$ for some $E_1,E_2\in \mathfrak{so}_4$. Note that $\text{tr}(M_A M^\dagger_B)=2\text{tr}(AB^\dagger)$ for any $A,B\in \CC^{4\times 4}$. Therefore, $M_A\in\mathfrak{g}_1$ is critical if and only if
\[
\text{tr}([\begin{pmatrix}
    E_1 & 0 \\ 0 & E_2
\end{pmatrix},M_A]M_{A}^\dagger)=2\text{tr}((E_1A+AE_2^\top)A^\dagger)=0
\]
for every $E_1,E_2\in \mathfrak{so}_4$, as desired.

\begin{lemma}\label{lem:crit_semi}
    Every critical point in $\mathfrak{g}_1$ is semisimple.
\end{lemma}
\begin{proof}
    This follows from  \Cref{prop:semiclosed} and the third item of \Cref{prop:K-N}.
\end{proof}

\begin{theorem}\label{thm:git1}
    A state $\ket{\varphi}\in (\CC^2)^{\otimes 4}$ is critical if and only if it is $\SU_2^{\times 4}$-equivalent to a point in $\mathcal{C}_2$.
\end{theorem}
\begin{proof}
    By \Cref{prop:obs}, every point of $\mathcal{C}_2$ is critical. Hence, $g\ket{\varphi}$ is critical whenever $g\in \SU_2^{\otimes 4}$ and $\ket{\varphi}\in\mathcal{C}_2$. Conversely, suppose $\ket{\varphi}\in(\CC^2)^{\otimes 4}$ is critical. By \Cref{lem:crit_semi}, $\ket{\varphi}$ is semisimple as a point in $\mathfrak{g}_1$. Thus, by \Cref{prop:conjugacy}, $\ket{\varphi}$ is $\SL_2^{\times 4}$-equivalent to a point $\ket{\varphi'}\in\mathcal{C}_2$. Since $\ket{\varphi'}$ is critical, $\ket{\varphi}$ is $\SU_2^{\times 4}$-equivalent to $\ket{\varphi'}$ by \Cref{prop:K-N}.
\end{proof}
\begin{theorem}\label{thm:git2}
    Two states $\ket{\varphi},\ket{\psi}\in\mathcal{C}_2$ are $\SL_2^{\times 4}$-equivalent if and only if they are $W(\mathcal{C}_2)$-equivalent.
\end{theorem}
\begin{proof}
    It follows immediately from the definition of $W(\mathcal{C}_2)$ that if $\ket{\varphi}$ and $\ket{\psi}$ are $W(\mathcal{C}_2)$-equivalent, then $\ket{\varphi}$ and $\ket{\psi}$ are $\SL_2^{\times 4}$-equivalent. Conversely, suppose that $\ket{\varphi}$ and $\ket{\psi}$ are $\SL_2^{\times 4}$-equivalent. Then $f(\ket{\varphi})=f(\ket{\psi})$ for any $\SL_2^{\times 4}$-invariant $f\in\CC[(\CC^2)^{\otimes 4}]$. By \Cref{thm:isomorphism}, we have $f'(\ket{\varphi})=f'(\ket{\psi})$ for all $W(\mathcal{C}_2)$-invariant $f'\in \CC[\mathcal{C}_2]$. Since $W(\mathcal{C}_2)\cong W(\mathfrak{so}_8)$ is finite, this implies that $\ket{\varphi}$ and $\ket{\psi}$ are $W(\mathcal{C}_2)$-equivalent (see \Cref{thm:catquotient}).
\end{proof}
By \Cref{prop:K-N}, two critical states in $(\CC^2)^{\otimes 4}$ are $\SL_2^{\times 4}$-equivalent if and only if they are $\SU_2^{\times 4}$-equivalent. Thus, \Cref{thm:git1,thm:git2} imply \Cref{thm:wallach1u}.

\subsection{Note on the spin flip} Taking the negative conjugate transpose $M\mapsto -M^\dagger$ of a matrix $M\in\mathfrak{g}_1$ corresponds to an involutory $\RR$-linear map $(\CC^2)^{\otimes 4}\to (\CC^2)^{\otimes 4}$ known as the \textit{spin flip} \cite{PhysRevLett.80.2245}. For an arbitrary number $n$ of qubits, this map is defined by
\[
\ell:(\CC^2)^{\otimes n}\to (\CC^2)^{\otimes n},\quad\ell\ket{\varphi}=Y^{\otimes n}\ket{\bar{\varphi}}.
\]
Recall that a bar over a vector indicates the complex conjugate. The spin flip is the standard time reversal operation for spin $\frac{1}{2}$ particles \cite{sakurai_modern_2021}. Note the Cayley-Klein parameterization of the special unitary group:
\begin{equation}\label{eq:cayley-klein}
\SU_2=\left\{\begin{pmatrix}
    a & b\\
    -\bar{b} & \bar{a}
\end{pmatrix}:a,b\in\CC\:\land\:|a|^2+|b|^2=1\right\}.
\end{equation}
 From this it is simple to check that $U_i Y=Y\bar{U}_i$ for any $U_i\in\SU_2$. If $U=U_1\otimes\dots\otimes U_n$ where each $U_i\in\SU_2$, then
 \[U(\ell \ket{\varphi})=UY^{\otimes n}\ket{\bar{\varphi}}=Y^{\otimes n}\bar{U}\ket{\bar{\varphi}}=\ell (U\ket{\varphi}),\quad\forall\ket{\varphi}\in(\CC^2)^{\otimes n}.\]
 That is, $\ell$ is equivariant: it commutes with any $U\in \SU_2^{\otimes n}$. For $n=4$ qubits, this corresponds to the fact that $gM^\dagger g^\dagger=(gM g^\dagger)^\dagger$ for $M\in\mathfrak{g}_1$ and $g\in G_0\cap\U_8$.

\subsection{Four-qubit codes of distance two}\label{sec:1uniform} We present a sequence of lemmas leading to \Cref{cor:1uniform}, which tells us that every nontrivial 4-qubit pure code is equivalent to a subspace of $\mathcal{C}_2$. This generalizes Rains' result on the uniqueness of the $((4,4,2))$ code (\Cref{thm:rainsuniqueness}).

\begin{lemma}[Wallach, {\cite[Lemma 3.73]{WallachGIT}}]\label{lem:wallach}
The point $M\in \mathfrak{g}_1$ is critical if and only if $[M,M^\dagger]=0$.
\end{lemma}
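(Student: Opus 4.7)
The plan is to start from the characterization of criticality given in equation \eqref{eq:altcritical}, namely that $M \in \mathfrak{g}_1$ is critical if and only if $\operatorname{tr}([E,M]M^\dagger) = 0$ for every $E \in \mathfrak{g}_0$, and then massage this trace condition using the cyclic property together with the $\ZZ_2$-grading.

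First I would check that $M^\dagger \in \mathfrak{g}_1$ whenever $M \in \mathfrak{g}_1$: writing $M = M_A$ for some $A \in \CC^{4\times 4}$ as in \Cref{sec:graded}, a direct computation gives $M^\dagger = M_{-\bar A}$. Hence $[M, M^\dagger] \in [\mathfrak{g}_1,\mathfrak{g}_1] \subset \mathfrak{g}_0$ by the grading. Next, using only the cyclic invariance of the trace,
\[
\operatorname{tr}([E,M]M^\dagger) = \operatorname{tr}(EMM^\dagger) - \operatorname{tr}(EM^\dagger M) = \operatorname{tr}\bigl(E\,[M,M^\dagger]\bigr),
\]
so the criticality condition becomes $\operatorname{tr}\bigl(E\,[M,M^\dagger]\bigr) = 0$ for all $E \in \mathfrak{g}_0$.

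The remaining step is to conclude $[M, M^\dagger] = 0$ from the above. Since $[M, M^\dagger]$ already lies in $\mathfrak{g}_0$, it suffices to observe that the trace form $(X,Y) \mapsto \operatorname{tr}(XY)$ restricts to a non-degenerate pairing on $\mathfrak{g}_0$. Explicitly, any element of $\mathfrak{g}_0$ has block-diagonal form $\operatorname{diag}(E_1, E_2)$ with $E_1, E_2 \in \mathfrak{so}_4$, and $\operatorname{tr}\bigl(\operatorname{diag}(F_1,F_2)\operatorname{diag}(E_1,E_2)\bigr) = \operatorname{tr}(F_1 E_1) + \operatorname{tr}(F_2 E_2)$; non-degeneracy then reduces to non-degeneracy of the trace form on the semisimple Lie algebra $\mathfrak{so}_4$, which is standard. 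Conversely, if $[M, M^\dagger] = 0$, the trace condition holds trivially, so $M$ is critical.

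I do not expect any serious obstacle here: the essential content is the interplay between the adjoint $\dagger$ and the $\ZZ_2$-grading (namely $\mathfrak{g}_1^\dagger = \mathfrak{g}_1$), combined with non-degeneracy of the trace form on $\mathfrak{g}_0$. The only mild subtlety is to verify that taking Hermitian conjugate preserves $\mathfrak{g}_1$, without which the bracket $[M, M^\dagger]$ would not automatically be constrained to $\mathfrak{g}_0$ and one could not pair it against arbitrary $E \in \mathfrak{g}_0$ to extract vanishing.
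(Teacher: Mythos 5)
Your proposal is correct and follows essentially the same route as the paper: rewrite $\operatorname{tr}([E,M]M^\dagger)$ as $\operatorname{tr}(E[M,M^\dagger])$ (the paper invokes the invariance identity \eqref{eq:killing}, you use cyclicity of the trace, which is the same computation) and then conclude by non-degeneracy of the trace form on $\mathfrak{g}_0$ together with the criticality criterion \eqref{eq:altcritical}. Your explicit checks that $M_A^\dagger = M_{-\bar A}\in\mathfrak{g}_1$, hence $[M,M^\dagger]\in\mathfrak{g}_0$, and that non-degeneracy reduces to $\mathfrak{so}_4$, are correct details that the paper leaves implicit.
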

\begin{proof}
    By \eqref{eq:killing}, we have
    $\text{tr} ([E,M]M^\dagger)=\text{tr}(E[M,M^\dagger])
    $
    for all $E\in \mathfrak{g}_0$.
    The restriction of the Killing form $\mathfrak{g}_0\times\mathfrak{g}_0\to\CC$ is nondegenerate. Therefore, $\text{tr} ([E,M]M^\dagger)=0$ for all $E\in\mathfrak{g}_0$ if and only if $[M,M^\dagger]=0.$ Using the condition \eqref{eq:altcritical} for being critical, we are done.
\end{proof}
\begin{lemma}\label{lem:critsubspace}
A subspace $\mathfrak{a}\subset \mathfrak{g}_1$ is critical if and only if $[\mathfrak{a},\mathfrak{a}^\dagger]=0$.
\end{lemma}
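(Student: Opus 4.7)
The plan is to reduce this to the single-element case handled by the preceding Wallach lemma, and then polarize to handle the mixed commutator $[M,N^{\dagger}]$ for $M,N\in\mathfrak{a}$.

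The reverse direction is immediate: if $[\mathfrak{a},\mathfrak{a}^{\dagger}]=0$, then in particular $[M,M^{\dagger}]=0$ for every $M\in\mathfrak{a}$, so by \Cref{lem:wallach} each $M\in\mathfrak{a}$ is critical, hence $\mathfrak{a}$ is critical by definition.

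For the forward direction, assume $\mathfrak{a}$ is critical. By \Cref{lem:wallach} we know $[M,M^{\dagger}]=0$ for every $M\in\mathfrak{a}$. Fix arbitrary $M,N\in\mathfrak{a}$ and, for $t\in\CC$, consider $M+tN\in\mathfrak{a}$ (using that $\mathfrak{a}$ is a complex subspace of $\mathfrak{g}_1$). Applying the hypothesis to $M+tN$ gives
\begin{equation*}
0=[M+tN,(M+tN)^{\dagger}]=[M,M^{\dagger}]+\bar t\,[M,N^{\dagger}]+t\,[N,M^{\dagger}]+|t|^{2}[N,N^{\dagger}]
=\bar t\,[M,N^{\dagger}]+t\,[N,M^{\dagger}].
\end{equation*}
Setting $t=1$ and $t=\im$ and combining yields $[M,N^{\dagger}]=0$, which is exactly what we need. (Equivalently, one can view the map $(M,N)\mapsto [M,N^{\dagger}]$ as sesquilinear and use the usual polarization identity.)

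There is no real obstacle here; the only thing to be careful about is that the polarization is performed over $\CC$ rather than $\RR$, which is legitimate because $\mathfrak{a}$ is a complex subspace and the map $N\mapsto N^{\dagger}$ is conjugate-linear, so both $t$ and $\bar t$ appear and one genuinely gets two independent linear equations in the two unknowns $[M,N^{\dagger}]$ and $[N,M^{\dagger}]$.
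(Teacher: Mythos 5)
Your proof is correct and follows essentially the same route as the paper: both directions reduce to \Cref{lem:wallach}, and the forward direction expands $[M+tN,(M+tN)^{\dagger}]=0$ to isolate $\bar t\,[M,N^{\dagger}]+t\,[N,M^{\dagger}]=0$. The only (immaterial) difference is in the last step, where you substitute $t=1,\im$ and solve the resulting linear system, while the paper observes that $L_z=\bar z[M,N^{\dagger}]$ is anti-Hermitian and skew-symmetric, hence real for all $z$, forcing it to vanish; your finish is if anything slightly more elementary since it does not use skew-symmetry.
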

\begin{proof}
    It is clear from \Cref{lem:wallach} that if $[\mathfrak{a},\mathfrak{a}^\dagger]=0$, then $\mathfrak{a}$ is critical. Conversely, suppose $\mathfrak{a}$ is critical. Then for any $M,N\in\mathfrak{g}_1$ and $z\in \CC$ we have
    \[
    0=[M+z N,M^\dagger+\bar{z} N^\dagger]=\bar{z}[M,N^\dagger]+z[N,M^\dagger]=\bar{z}[M,N^\dagger]+(\bar{z}[M,N^\dagger])^\dagger.
    \]
    Setting $E_z=\bar{z}[M,N^\dagger]$, we have $E_z=-E_z^\dagger$. Since $E_z$ is skew-symmetric, $E_z=(E_z^{\dagger})^\top$ so $E_z$ is purely real. Since $z\in \CC$ is arbitrary, we must have $[M,N^\dagger]=0$.
\end{proof}
\begin{lemma}\label{lem:uniquecritcartan}
    Let $\mathfrak{a}\subset\mathfrak{g}_1$ be a critical abelian subspace. There exists $g\in G_0\cap \U_8$ such that $g.\mathfrak{a}\subset\mathfrak{h}$.
\end{lemma}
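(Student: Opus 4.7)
My plan is to first invoke Vinberg's conjugacy of Cartan subspaces to move $\mathfrak{a}$ into $\mathfrak{h}$ by some (a priori non-unitary) element of $G_0$, and then to upgrade that element to one in $G_0\cap\U_8$ by applying the Kempf-Ness theorem to a sufficiently generic point of $\mathfrak{a}$.

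I would begin by verifying that every element of $\mathfrak{a}$ is semisimple. Given $M\in\mathfrak{a}$, criticality together with \Cref{lem:wallach} yields $[M,M^\dagger]=0$, so $M$ is a normal $8\times 8$ matrix and hence unitarily diagonalizable. Equivalently, the $G_0$-orbit of $M$ contains a critical point and is therefore closed, so by \Cref{prop:semiclosed} $M$ is semisimple in the Vinberg sense. Thus $\mathfrak{a}$ is an abelian subspace of semisimple elements, which extends to some Cartan subspace $\mathfrak{a}'\subset\mathfrak{g}_1$. Applying \Cref{prop:conjugacy} to $\mathfrak{a}'$ produces $g_0\in G_0$ with $g_0.\mathfrak{a}\subset g_0.\mathfrak{a}'=\mathfrak{h}$.

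Next I would exchange $g_0$ for a unitary. By \Cref{lem:stab}, I may pick $M\in\mathfrak{a}$ with the property that every $h\in G_0$ fixing $M$ fixes all of $\mathfrak{a}$ pointwise. Set $M':=g_0.M\in\mathfrak{h}$. Both $M$ and $M'$ are critical: $M$ because $\mathfrak{a}$ is critical, and $M'$ because $\mathfrak{h}$ is critical as noted in \Cref{sec:embed}. Transporting \Cref{prop:K-N} along the $\SL_2^{\times 4}$-equivariant identification $(\CC^2)^{\otimes 4}\cong\mathfrak{g}_1$ of \Cref{sec:embed}, which pairs $\SU_2^{\otimes 4}$ with $G_0\cap\U_8$, any two critical points in a common $G_0$-orbit lie in a common $(G_0\cap\U_8)$-orbit. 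Hence there exists $u\in G_0\cap\U_8$ with $u.M=M'=g_0.M$.

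To finish, I observe that $u^{-1}g_0$ fixes $M$, so by the choice of $M$ it fixes every $N\in\mathfrak{a}$ pointwise. Therefore $u.N=g_0.N\in\mathfrak{h}$ for all $N\in\mathfrak{a}$, which gives $u.\mathfrak{a}\subset\mathfrak{h}$, as desired. The main obstacle is the upgrade from $g_0$ to a unitary; every other step is a direct citation. The Kempf-Ness invocation requires a moment's care because \Cref{prop:K-N} is phrased for $\SL_D^{\otimes n}$ acting on states rather than for $G_0$ acting on $\mathfrak{g}_1$, but the change of coordinates $T^{\otimes 2}$ converts one picture into the other so that the translation is mechanical.
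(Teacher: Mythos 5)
Your proposal is correct and follows essentially the same route as the paper: semisimplicity via \Cref{prop:semiclosed} together with item 3 of \Cref{prop:K-N}, conjugation into $\mathfrak{h}$ via \Cref{prop:conjugacy}, and the upgrade to a unitary by choosing a generic point with \Cref{lem:stab} and applying the Kempf--Ness theorem to two critical points in a common $G_0$-orbit. The only cosmetic difference is that you apply \Cref{lem:stab} to $\mathfrak{a}$ rather than to $g_0.\mathfrak{a}$, which changes nothing.
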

\begin{proof}
    by \Cref{lem:crit_semi}, $\mathfrak{a}$ consists of semisimple elements. By \Cref{prop:conjugacy}, there exists $g\in G_0$ such that $g.\mathfrak{a}\subset\mathfrak{h}$. It is left to show that $g$ can be chosen to be unitary. We can choose, by \Cref{lem:stab}, an element $M\in g.\mathfrak{a}$ such that $h.(g.\mathfrak{a})=g.\mathfrak{a}$ whenever $h\in G_0$ such that $h.M=M$. Since $g^{-1}.x\in\mathfrak{a}$ and $M\in\mathfrak{h}$ are both critical, by \Cref{prop:K-N}, there exists $\tilde{g}\in G_0\cap \U_8$ such that $\tilde{g}.(g^{-1}.M)=M$. This element has the form $\tilde{g}=hg$, where $h\in G_0$ such that $h.M=M$. It follows that $\tilde{g}.\mathfrak{a}=g.\mathfrak{a}$.
\end{proof}
\begin{corollary}\label{cor:wal}
    If $M\in\mathfrak{g}_1$ is critical, then there exists $g\in G_0\cap \U_8$ such that $g.M\in\mathfrak{h}$.
\end{corollary}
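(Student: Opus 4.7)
The plan is simply to deduce the corollary from \Cref{lem:uniquecritcartan} by taking $\mathfrak{a} = \CC M$, the one-dimensional subspace spanned by $M$.

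First I would verify the two hypotheses needed to apply \Cref{lem:uniquecritcartan}: that $\mathfrak{a}$ is abelian and that $\mathfrak{a}$ is critical. Abelianness is immediate since $[M,M]=0$, so $[\mathfrak{a},\mathfrak{a}]=0$. For criticality, I invoke \Cref{lem:wallach}: since $M$ is critical we have $[M,M^\dagger]=0$, hence for any $c\in\CC$,
\[
[cM,(cM)^\dagger] = c\bar{c}\,[M,M^\dagger] = 0,
\]
so by \Cref{lem:wallach} again $cM$ is critical. Thus every element of $\mathfrak{a}=\CC M$ is critical, i.e.\ $\mathfrak{a}$ is a critical abelian subspace of $\mathfrak{g}_1$.

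Applying \Cref{lem:uniquecritcartan} to $\mathfrak{a}$ yields an element $g\in G_0\cap\U_8$ with $g.\mathfrak{a}\subset\mathfrak{h}$. In particular $g.M\in\mathfrak{h}$, which is the conclusion. There is essentially no obstacle here since all the substantive work—conjugating an abelian semisimple subspace into $\mathfrak{h}$ via a unitary element—has already been accomplished in \Cref{lem:uniquecritcartan}; the only thing to check is that a single critical point generates a critical subspace, which is what \Cref{lem:critsubspace} (or directly the preceding computation) guarantees.
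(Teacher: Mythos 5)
Your proposal is correct and follows exactly the paper's own route: observe that $M$ spans a critical abelian subspace (via \Cref{lem:wallach}) and then apply \Cref{lem:uniquecritcartan}. You merely spell out the scalar-multiple computation that the paper leaves implicit.
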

\begin{proof}
    If $M$ is critical, then $M$ spans a critical abelian subspace. The result follows from \Cref{lem:uniquecritcartan}.
\end{proof}
For the proof of \Cref{lem:lll}, let us work out what it means for elements of $\mathfrak{g}_1$ to commute in terms of their $4\times 4$ submatrices. For $A,B\in\CC^{4\times 4}$, we have
\[
[M_A,M_B]=\begin{pmatrix}
    -AB^\top+BA^\top & 0 \\
    0 & -A^\top B + B^\top A
\end{pmatrix}.
\]
Therefore, $[M_A,M_B]=0$ if and only if $AB^\top=BA^\top=(AB^\top)^\top$ and $B^\top A=A^\top B=(B^\top A)^\top$ are symmetric.
\begin{lemma}\label{lem:lll}
    Suppose $M,N\in \mathfrak{g}_1$ such that $N$ is critical. We have $[M,N]=0$ if and only if $[M,N^\dagger]=0$.
\end{lemma}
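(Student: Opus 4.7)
The plan is to reduce to the case where $N$ lies in the Cartan subspace $\mathfrak{h}$, and then prove the claim by direct entrywise computation.

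First, since $N$ is critical, \Cref{cor:wal} yields $g \in G_0 \cap \U_8$ with $g.N \in \mathfrak{h}$. Because $g$ is simultaneously orthogonal ($g^\top = g^{-1}$) and unitary ($g^\dagger = g^{-1}$), it is in fact real, so the action of $g$ on $\mathfrak{g}_1$ commutes with Hermitian conjugation: $(g.X)^\dagger = (gXg^\top)^\dagger = gX^\dagger g^\top = g.(X^\dagger)$. Combined with the $G_0$-equivariance of the Lie bracket, this shows $[M,N]=0 \iff [g.M,g.N]=0$ and $[M,N^\dagger]=0 \iff [g.M,(g.N)^\dagger]=0$. Replacing $(M,N)$ by $(g.M,g.N)$, we may assume from the start that $N = M_B$ with $B$ a diagonal $4\times 4$ matrix. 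The fact that the conjugating element can be taken unitary, so that both hypotheses transform symmetrically, is the main subtlety in the argument; it is exactly what \Cref{cor:wal} delivers.

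Writing $M = M_A$, the calculation immediately preceding the lemma shows that $[M_A, M_B]=0$ iff $AB^\top = BA^\top$ and $B^\top A = A^\top B$. With $B$ diagonal these become the entrywise conditions
\[
A_{ij}B_{jj}=A_{ji}B_{ii},\qquad A_{ij}B_{ii}=A_{ji}B_{jj}
\]
for all $i,j$. Taking the sum and the difference of the two relations shows that they are jointly equivalent to
\[
(A_{ij}-A_{ji})(B_{ii}+B_{jj})=0,\qquad (A_{ij}+A_{ji})(B_{ii}-B_{jj})=0
\]
for all $i,j$. A direct check gives $M_B^\dagger = M_{-\bar B}$, and since $-\bar B$ is also diagonal, repeating the calculation with $N^\dagger$ in place of $N$ yields the same pair of relations with $B_{ii}$ and $B_{jj}$ replaced by $\bar B_{ii}$ and $\bar B_{jj}$ (the overall minus signs factor out of the bracketed quantities). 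Since a complex number vanishes if and only if its complex conjugate vanishes, $B_{ii}\pm B_{jj}=0$ iff $\bar B_{ii}\pm\bar B_{jj}=0$, so the two systems are equivalent entry by entry. This proves the lemma.
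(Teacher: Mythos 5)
Your proof is correct and follows essentially the same route as the paper's: reduce to $N\in\mathfrak{h}$ via \Cref{cor:wal} (using that the unitary element of $G_0$ commutes with $\dagger$), then observe that the entrywise commutation conditions for a diagonal $B$ are invariant under complex conjugation of $B$. The only cosmetic difference is that you handle both directions symmetrically via the conjugation-invariance of the conditions, whereas the paper first notes that one implication suffices; the content is identical.
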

\begin{proof}
    It suffices to prove one direction because $N^\dagger$ is critical if $N$ is critical and $(N^\dagger)^\dagger=N$. Furthermore, we may assume that $N\in\mathfrak{h}$ since there exists $g\in G_0\cap \U_8$ such that $g.N\in\mathfrak{h}$ (\Cref{cor:wal}) and $N\mapsto N^\dagger$ commutes with the action of $g$.
    
    Let $A,B\in \CC^{4\times 4}$ such that $B$ is diagonal. If $[M_A,M_B]=0$, then $AB$ and $BA$ are symmetric. That is, for any $1\leq i<j\leq 4$ we have
    \begin{align*}
    0&=(AB)_{ij}-(AB)_{ji}=A_{ij}B_{jj}-A_{ji}B_{ii},\\
    0&=(BA)_{ij}-(BA)_{ji}=A_{ij}B_{ii}-A_{ji}B_{jj}.
    \end{align*}
    Adding and subtracting the equations above, we obtain the equivalent system:
    \begin{align*}
    0&=(A_{ij}-A_{ji})B_{ii}+(A_{ij}-A_{ji})B_{jj}=(A_{ij}-A_{ji})(B_{ii}+B_{jj}), \\
    0&=(A_{ij}+A_{ji})B_{jj}-(A_{ij}+A_{ji})B_{ii}=(A_{ij}+A_{ji})(B_{jj}-B_{ii}).
    \end{align*}
    But this holds if and only if
    \[
    0=(A_{ij}-A_{ji})(\bar{B}_{ii}+\bar{B}_{jj})\quad\text{and}\quad
    0=(A_{ij}+A_{ji})(\bar{B}_{jj}-\bar{B}_{ii}).\]
    Reasoning backwards, we conclude that $A\bar{B}$ and $\bar{B}A$ are symmetric. Thus, $[M_A,M_B^\dagger]=[M_A,-M_{\bar{B}}]=0$.
\end{proof}

\begin{theorem}\label{thm:general}
    Let $\mathfrak{a}\subset\mathfrak{g}_1$ be a critical subspace. There exists $g\in G_0\cap \U_8$ such that $g.\mathfrak{a}\subset\mathfrak{h}$.
\end{theorem}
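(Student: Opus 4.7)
The plan is to reduce the general case to Lemma \ref{lem:uniquecritcartan} by showing that a critical subspace is automatically abelian. Lemma \ref{lem:critsubspace} already tells us that criticality of $\mathfrak{a}$ is equivalent to $[\mathfrak{a},\mathfrak{a}^\dagger]=0$, so the missing piece is to pass from commuting with the conjugate transposes to commuting with each other. This is exactly the content of Lemma \ref{lem:lll}, which was proven for pairs of elements; we just need to iterate it over all pairs in $\mathfrak{a}$.

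More precisely, I would first invoke Lemma \ref{lem:critsubspace} to obtain $[\mathfrak{a},\mathfrak{a}^\dagger]=0$. Then, given arbitrary $M,N\in\mathfrak{a}$, observe that $N$ is critical (because $\mathfrak{a}$ is a critical subspace), and $[M,N^\dagger]=0$ (because $[\mathfrak{a},\mathfrak{a}^\dagger]=0$). Lemma \ref{lem:lll} applied to the pair $(M,N)$ then yields $[M,N]=0$. Since $M,N$ were arbitrary, $\mathfrak{a}$ is abelian. Combined with its being critical, Lemma \ref{lem:uniquecritcartan} produces $g\in G_0\cap\U_8$ with $g.\mathfrak{a}\subset\mathfrak{h}$, completing the proof.

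There is really no serious obstacle here: the substantive work has been distributed across Lemmas \ref{lem:wallach}, \ref{lem:critsubspace}, \ref{lem:uniquecritcartan}, and \ref{lem:lll}. The only point one needs to watch is that Lemma \ref{lem:lll} requires one of the two arguments (namely $N$) to be critical, which is automatic here from the hypothesis that every element of $\mathfrak{a}$ is critical. This is the conceptual highlight of the argument: even though ``critical'' is an $\RR$-linear condition on a single vector, for a \emph{subspace} it forces the much stronger $\CC$-linear condition of commutativity, so that a critical subspace in $\mathfrak{g}_1$ is really just an abelian subspace of semisimple elements in disguise, hence (up to $G_0\cap\U_8$) sits inside the Cartan subspace $\mathfrak{h}$.
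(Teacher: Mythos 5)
Your proposal is correct and follows exactly the paper's own argument: use \Cref{lem:critsubspace} to get $[M,N^\dagger]=0$ for all $M,N\in\mathfrak{a}$, apply \Cref{lem:lll} pairwise to conclude $\mathfrak{a}$ is abelian, and then invoke \Cref{lem:uniquecritcartan}. Your remark that criticality of the subspace is what licenses the application of \Cref{lem:lll} to each pair is the right point to flag, and nothing is missing.
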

\begin{proof}
    We show that $\mathfrak{a}$ is abelian; the result follows from \Cref{lem:uniquecritcartan}. Let $M,N\in\mathfrak{a}$. By \Cref{lem:critsubspace}, we have $[M,N^\dagger]=0$. By \Cref{lem:lll}, we have $[M,N]=0$.
\end{proof}

\begin{corollary}\label{cor:1uniform}
    Every nontrivial pure code $\mathcal{C}\subset(\CC^2)^{\otimes 4}$ is $\SU_2^{\times 4}$-equivalent to a subspace of $\mathcal{C}_2$.
\end{corollary}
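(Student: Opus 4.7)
The plan is to reduce the statement to \Cref{thm:general} by translating everything through the unitary change of coordinates $T^{\otimes 2}\colon (\CC^2)^{\otimes 4}\to\CC^4\otimes\CC^4\cong \mathfrak{g}_1$ introduced in \Cref{sec:embed}. First I would note that, by definition, a pure code has distance $d\geq 2$, so \Cref{prop:obs} gives that $\mathcal{C}$ is $1$-uniform; equivalently, every vector in $\mathcal{C}$ is critical. Hence the image $\mathfrak{a}\subset\mathfrak{g}_1$ of $\mathcal{C}$ under $T^{\otimes 2}$ is a critical subspace of $\mathfrak{g}_1$ in the Vinberg-theoretic sense.

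Next I would apply \Cref{thm:general} directly: there is some $g\in G_0\cap \U_8$ with $g.\mathfrak{a}\subset\mathfrak{h}$. The remaining task is to show that the action of such a $g$ on $\mathfrak{g}_1$ corresponds, under $T^{\otimes 2}$, to the action of an element of $\SU_2^{\times 4}$ on $(\CC^2)^{\otimes 4}$, and that $\mathfrak{h}$ corresponds to $\mathcal{C}_2$. For the first point, I would invoke the identity $T(\SU_2\otimes\SU_2)T^\dagger=\SO_4(\CC)\cap \U_4=\SO_4(\RR)$ from \Cref{sec:unitarychange}; since $G_0\cong \SO_4^{\times 2}$ and the action of $G_0$ on $\mathfrak{g}_1\cong \CC^{4\times 4}$ is $(U_1,U_2).A=U_1 A U_2^\top$, any $g\in G_0\cap\U_8$ decomposes as a pair $(U_1,U_2)\in\SO_4(\RR)^{\times 2}$, each of which is the image under the double cover of a factor in $\SU_2\otimes\SU_2$. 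Composing these double covers produces an element of $\SU_2^{\times 4}$ whose action on $(\CC^2)^{\otimes 4}$ is intertwined with $g$ by $T^{\otimes 2}$.

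For the second point I would simply recall from \Cref{sec:embed} that, under the same change of coordinates, the Cartan subspace $\mathfrak{h}=\{M_A : A\in\CC^{4\times 4}\text{ diagonal}\}$ corresponds to the $((4,4,2))_2$ code $\mathcal{C}_2$. Putting the pieces together: the critical subspace $\mathfrak{a}=T^{\otimes 2}(\mathcal{C})$ satisfies $g.\mathfrak{a}\subset\mathfrak{h}$ for some $g\in G_0\cap\U_8$, which after pulling back by $T^{\otimes 2}$ yields $U.\mathcal{C}\subset \mathcal{C}_2$ for some $U\in\SU_2^{\times 4}$, as required. The one step that needs genuine care (rather than bookkeeping) is \Cref{thm:general} itself, which has already been established, so the corollary follows by assembling the translations above; the main obstacle is making sure the identifications between $G_0\cap \U_8$ and $\SU_2^{\times 4}$, and between $\mathfrak{h}$ and $\mathcal{C}_2$, are spelled out consistently with the conventions of \Cref{sec:embed}.
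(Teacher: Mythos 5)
Your proposal is correct and follows essentially the same route as the paper: reduce to the observation that a pure code (with $d\geq 2$) is $1$-uniform, hence corresponds to a critical subspace of $\mathfrak{g}_1$, and then apply \Cref{thm:general}, pulling the conclusion back through the change of coordinates of \Cref{sec:embed}. The paper's own proof is just this reduction stated in one line (phrased via the non-existence of $4$-qubit AME states rather than via $d\geq 2$ directly); you have merely made explicit the bookkeeping identifying $G_0\cap\U_8$ with $\SU_2^{\times 4}$ and $\mathfrak{h}$ with $\mathcal{C}_2$, which the paper leaves implicit.
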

\begin{proof}
    Since 4-qubit AME states do not exist \cite{HIGUCHI2000213}, there are no pure codes $\mathcal{C}\subset(\CC^2)^{\otimes 4}$ of distance greater than 2. Thus, the result follows from \Cref{thm:general}.
\end{proof}

\section{Main result}\label{sec:classification}

In this section, we prove our main result, \Cref{thm:mainresult}, which is split into \Cref{thm:first,thm:second,thm:separating}. We will need some basic results of algebraic geometry and invariant theory which can be found in Appendix~\ref{sec:algebraic_geometry} and Appendix~\ref{sec:invariant_theory}.

Let $V$ be a finite-dimensional real vector space. If $\mathcal{A}\subset V$ is a (affine) variety, $\textbf{I}(\mathcal{A})\subset \RR[V]$ is the ideal of $\mathcal{A}$. If $\mathcal{I}\subset \RR[V]$ is an ideal, $\textbf{V}_\RR(\mathcal{I})$ is the variety of $\mathcal{I}$.

\begin{table}
    \begin{tabular}{c}
    \toprule
        $x_{02}y_{01}-x_{01}y_{02}+x_{12}y_{11}-x_{11}y_{12}$ \\
        $x_{03}y_{00}-x_{00}y_{03}+x_{13}y_{10}-x_{10}y_{13}$ \\
        $x_{00}x_{03}+x_{10}x_{13}+y_{00}y_{03}+y_{10}y_{13}$\\
        $x_{01}x_{02}+x_{11}x_{12}+y_{01}y_{02}+y_{11}y_{12}$\\
        $x_{01}^2+x_{02}^2-x_{10}^2-x_{13}^2+y_{01}^2+y_{02}^2-y_{10}^2-y_{13}^2$\\
        $x_{00}^2+x_{03}^2-x_{11}^2-x_{12}^2+y_{00}^2+y_{03}^2-y_{11}^2-y_{12}^2$\\
        $x_{10}y_{00}+x_{11}y_{01}+x_{12}y_{02}+x_{13}y_{03}-x_{00}y_{10}-x_{01}y_{11}-x_{02}y_{12}-x_{03}y_{13}$ \\
        $x_{02}y_{00}-x_{03}y_{01}-x_{00}y_{02}+x_{01}y_{03}+x_{12}y_{10}-x_{13}y_{11}-x_{10}y_{12}+x_{11}y_{13}$\\
        $x_{01}y_{00}-x_{00}y_{01}-x_{03}y_{02}+x_{02}y_{03}+x_{11}y_{10}-x_{10}y_{11}-x_{13}y_{12}+x_{12}y_{13}$ \\
        $x_{00}x_{10}+x_{01}x_{11}+x_{02}x_{12}+x_{03}x_{13}+y_{00}y_{10}+y_{01}y_{11}+y_{02}y_{12}+y_{03}y_{13}$\\
        $x_{00}x_{02}-x_{01}x_{03}+x_{10}x_{12}-x_{11}x_{13}+y_{00}y_{02}-y_{01}y_{03}+y_{10}y_{12}-y_{11}y_{13}$\\
        $x_{00}x_{01}-x_{02}x_{03}+x_{10}x_{11}-x_{12}x_{13}+y_{00}y_{01}-y_{02}y_{03}+y_{10}y_{11}-y_{12}y_{13}$\\
        \bottomrule
    \end{tabular}
    \bigskip
    \caption{Generators of the ideal $\mathcal{J}$ from the proof of \Cref{thm:first}.}\label{table2}
\end{table}

\begin{theorem}\label{thm:first}
    Suppose $\ket{\psi}\in(\CC^2)^{\otimes 5}$ is AME. There exists $g\in\SU_2^{\times 5}$ such that $g.\ket{\psi}\in\mathcal{C}_1$.
\end{theorem}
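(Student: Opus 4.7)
My plan is to combine Rains's contraction with the four-qubit classification (\Cref{cor:1uniform}) and the uniqueness of the $((5,2,3))_2$ code to push $\ket{\psi}$ into $\mathcal{C}_1$. Since $\text{span}\{\ket{\psi}\}$ is a pure $((5,1,3))_2$ code, \Cref{prop:rains2} produces the two contractions $\ket{\psi_i}=\bra{i}_1\ket{\psi}$ for $i\in\{0,1\}$, which form an orthonormal basis (up to a global rescaling) of a pure $((4,2,2))_2$ code $\mathcal{D}\subset(\CC^2)^{\otimes 4}$. By \Cref{cor:1uniform}, there exists $g\in\SU_2^{\otimes 4}$ acting on qubits $2$--$5$ with $g.\mathcal{D}\subset\mathcal{C}_2$. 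After replacing $\ket{\psi}$ by $(I\otimes g)\ket{\psi}$, I may assume $\ket{\psi}=\ket{0}\ket{\chi_0}+\ket{1}\ket{\chi_1}$ with orthonormal $\ket{\chi_0},\ket{\chi_1}\in\mathcal{C}_2$, so in particular $\ket{\psi}\in\CC^2\otimes\mathcal{C}_2$.

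After this reduction, I would exhibit a further local unitary transformation placing $\ket{\psi}$ into $\mathcal{C}_1$. My preferred route is to extend $\text{span}\{\ket{\psi}\}$ to a $((5,2,3))_2$ MDS code $\mathcal{E}\ni\ket{\psi}$: by \Cref{thm:rainsuniqueness} such a code is $\SU_2^{\times 5}$-equivalent to $\mathcal{C}_1$, so the equivalence sends $\ket{\psi}$ into $\mathcal{C}_1$. The candidate extension is obtained by constructing a 5-qubit AME state $\ket{\psi^\perp}$ with $\braket{\psi|\psi^\perp}=0$ such that every combination $\alpha\ket{\psi}+\beta\ket{\psi^\perp}$ is 2-uniform; this would come from a 6-qubit AME purification $\ket{\Phi}\in(\CC^2)^{\otimes 6}$ with $\bra{0}_1\ket{\Phi}=\ket{\psi}$, after which \Cref{prop:top} reinterprets $\{\bra{0}_1\ket{\Phi},\bra{1}_1\ket{\Phi}\}$ as a basis of the required $((5,2,3))_2$ code containing $\ket{\psi}$.

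To carry out the necessary algebra, I would parameterize $\ket{\psi}$ by the $2\times 4$ complex matrix $C=(c_{ij})$ defined by $\ket{\psi}=\sum c_{ij}\ket{i}\ket{u_j}$ and translate the 2-uniformity of $\ket{\psi}$ into polynomial equations on the $c_{ij}$: the 2-qubit reductions involving qubit 1 force the rows of $C$ to be orthonormal and the partial overlaps $\operatorname{Tr}_{\{2,3,4,5\}\setminus\{k\}}(\ket{\chi_0}\bra{\chi_1})$ to vanish for every $k$, while the 2-qubit reductions inside $\{2,3,4,5\}$ force the marginals of the projector $\ket{\chi_0}\bra{\chi_0}+\ket{\chi_1}\bra{\chi_1}$ to be maximally mixed. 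The residual symmetry group acting on $C$ is $\SU_2\times W(D_4)$, where the $\SU_2$ factor acts on qubit 1 by left multiplication and where $W(D_4)\subset\SU_2^{\otimes 4}$ is the stabilizer of $\mathcal{C}_2$ identified in \Cref{thm:wallach1u}, acting on the $\ket{u_j}$ basis of $\mathcal{C}_2$ by signed permutations (i.e.\ right multiplication on $C$).

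The main obstacle is producing the AME extension $\ket{\psi^\perp}$, or equivalently verifying the 6-qubit AME purification exists, since not every pure state with $\ket{\psi}$ as a first-qubit contraction is 2-uniform. I expect to handle this by combining the explicit AME constraints on $C$ from the previous paragraph with the residual $\SU_2\times W(D_4)$-action to either normalize $C$ directly to the form of an element of $\mathcal{C}_1$ given by the explicit formulas for $\ket{\Psi_0},\ket{\Psi_1}$ in \Cref{sec:main}, or to produce the orthogonal complement $\ket{\psi^\perp}$ in closed form. A Vinberg-theoretic reformulation in terms of the Cartan subspace $\mathfrak{h}$ (to which $\mathcal{C}_2$ corresponds) and 2-dimensional subspaces of $\mathfrak{h}$ modulo the Weyl group should make this algebraic step tractable and is where I expect most of the work to go.
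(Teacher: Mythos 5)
Your opening reduction is exactly the paper's: contract on qubit 1, invoke \Cref{prop:rains2} and \Cref{cor:1uniform} to place $\ket{\psi}$ in $\CC^2\otimes\mathcal{C}_2$, and observe via \Cref{thm:rainsuniqueness} that it suffices to put $\ket{\psi}$ inside \emph{some} $((5,2,3))_2$ code. The problem is everything after that. The decisive step --- showing that every AME state in $\CC^2\otimes\mathcal{C}_2$ actually admits the orthogonal AME partner $\ket{\psi^\perp}$ (equivalently, a 6-qubit AME purification, equivalently membership in a $((5,2,3))_2$ code) --- is precisely the content of the theorem after the reduction, and your proposal does not prove it: you reformulate it, name it as ``the main obstacle,'' and say you ``expect to handle this'' by normalizing the coefficient matrix $C$ under a residual $\SU_2\times W(D_4)$-action or producing $\ket{\psi^\perp}$ in closed form. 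That is a statement of intent, not an argument. Note in particular that such an extension is not automatic: a generic unit vector $\ket{\psi^\perp}$ orthogonal to $\ket{\psi}$ with all of $\mathrm{span}\{\ket{\psi},\ket{\psi^\perp}\}$ being $2$-uniform need not exist a priori, and Huber--Grassl (\Cref{prop:top}) only converts a $((5,2,3))_2$ code into a 6-qubit AME state and back; it does not manufacture the code from a single AME vector.

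The paper closes this gap with a concrete real-algebraic computation that your normalization strategy would have to reproduce in some form. It coordinatizes $\CC^2\otimes\mathcal{C}_2$, writes the AME conditions as the 12 real quadrics of \Cref{table2} generating an ideal $\mathcal{J}$, exhibits two explicitly parametrized families $\mathcal{A}_1=\mathrm{im}(\gamma)$ and $\mathcal{A}_2=\mathrm{im}(\kappa)$ of AME states each lying in the local unitary orbit of a $((5,2,3))_2$ code, and then verifies in Macaulay2 that $\sum_{ij}|z_{ij}|^2$ belongs to the ideal quotient $\mathcal{J}:(\mathbf{I}(\mathcal{A}_1)\cap\mathbf{I}(\mathcal{A}_2))$, which by \Cref{thm:colon} forces $\mathcal{A}=\mathcal{A}_1\cup\mathcal{A}_2$. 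Two warnings if you pursue your route: first, the AME variety in $\CC^2\otimes\mathcal{C}_2$ has \emph{two} irreducible components (one with the roles of $\ket{\varphi_{01}}$ and $\ket{\varphi_{10}}$ swapped), so normalizing every $C$ to a single canonical form under the residual group is not possible without accounting for both; second, a purely group-theoretic normalization argument would still need some certificate that no AME states outside $\mathcal{A}_1\cup\mathcal{A}_2$ exist, which is where the ideal-quotient (or an equivalent elimination) computation is doing real work.
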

\begin{proof}
    The $((5,2,3))$ code $\mathcal{C}_1$ is unique up to local unitary equivalence (\Cref{thm:rainsuniqueness}), so it suffices to show that every AME state is equivalent to a point of \textit{some} $((5,2,3))$ code. Since $\ket{\psi}$ is 2-uniform, the vectors $\bra{0}_1\ket{\psi}$ and $\bra{1}_1\ket{\psi}$ span a $((4,2,2))$ code (\Cref{prop:rains}). By \Cref{cor:1uniform}, we can apply a local unitary change of coordinates so that $\bra{0}_1\ket{\psi}$ and $\bra{1}_1\ket{\psi}$ are in $\mathcal{C}_2$. That is, we may assume that $\ket{\psi}\in \CC^2\otimes \mathcal{C}_2$. The proof proceeds as follows. Let $\mathcal{A}\subset \CC^2\otimes\mathcal{C}_2$ be the real variety of AME states in $\CC^2\otimes\mathcal{C}_2$. We show that this variety decomposes into irreducibles as $\mathcal{A}=\mathcal{A}_1\cup\mathcal{A}_2$, where $\mathcal{A}_1$ and $\mathcal{A}_2$ are orbits of $((5,2,3))$ codes under the action of a subgroup of $\SU_2^{\times 5}$.
    
    In what follows, we use, from \Cref{sec:main}, the bases $\{\ket{\psi_i}\}$ and $\{\ket{\psi_{ij}}\}$ of $\mathcal{C}_1$ and $\mathcal{C}_2$, respectively. These bases are related by the equation $\ket{\psi_{ij}}=\bra{j}_1\ket{\psi_i}$ so that
    \begin{align}\label{eq:bdef}
    \ket{\psi_0}=\ket{0}\ket{\psi_{00}}+\ket{1}\ket{\psi_{01}}\quad \text{and}\quad
    \ket{\psi_1}=\ket{0}\ket{\psi_{10}}+\ket{1}\ket{\psi_{11}}.
    \end{align}
    Every $\ket{\psi}\in\CC^2\otimes\mathcal{C}_2$ has the form
    \begin{align*}
\ket{\psi}=&\sum_{i=0}^1\ket{i}\otimes(z_{i0}\ket{\psi_{00}}+z_{i1}\ket{\psi_{01}}+z_{i2}\ket{\psi_{10}}+z_{i3}\ket{\psi_{11}})
    \end{align*}
    for some constants $z_{ij}\in\CC$. Let $z_{ij}=x_{ij}+\im y_{ij}$ where $x_{ij},y_{ij}\in \RR$. The variety $\mathcal{A}$ is cut out by the 12 quadratics listed in \Cref{table2}, which generate an ideal $\mathcal{J}\subset\RR[x_{ij},y_{ij}]$. These polynomials come from the fact that $\ket{\psi}$ is AME if and only if $\rho_S=\Tr_{S}(\ket{\psi}\bra{\psi})$ is proportional to the identity for every 3-element subset $S\subset\{1,\dots,5\}$. Thus, $\mathcal{A}$ is defined by the vanishing of
    $
    (\rho_S)_{ij}
    $ for $1\leq i<j\leq 4$ and $(\rho_S)_{11}-(\rho_S)_{ii}$ for $2\leq i\leq 4$. By taking real and imaginary parts, this collection of functions corresponds to a collection of real polynomials. After simplification, we obtain \Cref{table2}.
    
     Using the Cayley-Klein parameterization \eqref{eq:cayley-klein}, define the map
    \begin{align*}
    \gamma(a,b,c,d)&=
        \begin{pmatrix}
            a & b\\
            -\bar{b} & \bar{a}
        \end{pmatrix}\otimes I\otimes I\otimes I\otimes I
        (c\ket{\psi_0}+d\ket{\psi_1}),
    \end{align*}
    where $a,b,c,d\in\CC$. By definition, every point in the image of $\gamma$ is in the $\SU_2^{\times 5}$-orbit of a point in the $((5,2,3))$ code $\mathcal{C}_1$ spanned by $\ket{\psi_0}$ and $\ket{\psi_1}$. Note that we do not need to enforce the condition $|a|^2+|b|^2=1$ since a scalar can be extracted from the parameters $a$ and $b$ and moved to the parameters $c$ and $d$. A calculation shows that $\gamma(a,b,c,d)=\sum_{i=0}^1\ket{i}\otimes \gamma_i(a,b,c,d)$, where
    \begin{align*}
    \gamma_0(a,b,c,d)&=ac\ket{\psi_{00}}+bc\ket{\psi_{01}}+ad\ket{\psi_{10}}+bd\ket{\psi_{11}},\text{ and} \\ \gamma_1(a,b,c,d)&=-\bar{b}c\ket{\psi_{00}}+\bar{a}c\ket{\psi_{01}}-\bar{b}d\ket{\psi_{10}}+\bar{a}d\ket{\psi_{11}}.
    \end{align*}
    The image of $\gamma$ is a variety $\mathcal{A}_1$ equivalent to the set of matrices in $\CC^{2\times 4}$ with block form $(cU|dU)$, where $U\in\SU_2$ and $c,d\in\CC$. Since the domain of $\gamma$ is irreducible, so is $\mathcal{A}_1$.
    
    The construction above works just as well with $\ket{\varphi_{01}}$ and $\ket{\varphi_{10}}$ swapped, as we now explain. The vectors $\ket{\psi'_0}$ and $\ket{\psi'_1}$, defined as in \eqref{eq:bdef} but with $\ket{\psi_{01}}$ replaced with $\ket{\psi_{10}}$ and vice-versa span another $((5,2,3))$ code different from $\mathcal{C}_1$. From these $\ket{\psi'_i}$, we obtain a map $\kappa(a,b,c,d)$ defined analogously to $\gamma$.
    Like $\gamma$, the image of $\kappa$ is an irreducible variety $\mathcal{A}_2$ consisting of AME states that are equivalent to a points in a $((5,2,3))$ code.
    
    The domain of $\gamma$ and $\kappa$ can be considered to be $\RR^8$ by taking real and imaginary parts; we write, for example, $a=a_0+\im a_1$ for some $a_0,a_1\in\RR$.
    The maps $\gamma,\kappa:\RR^8\to \CC^2\otimes\mathcal{C}_2$ correspond to ring maps $ \gamma^*,\kappa^*:\RR[x_{ij},y_{ij}]\to \RR[a_0,a_1,\dots,d_0,d_1]$
    defined by $\gamma^*(f)=f\circ \gamma$ and $\kappa^*(f)=f\circ \kappa$.
    Then $\ker (\gamma^*)=\textbf{I}(\mathcal{A}_1)$ and $\ker (\kappa^*)=\textbf{I}(\mathcal{A}_2)$, giving us a way to compute the ideals of $\mathcal{A}_1$ and $\mathcal{A}_2$. By \Cref{thm:colon}, we have
    \[
    \mathcal{A}\setminus(\mathcal{A}_1\cup \mathcal{A}_2)\subset \textbf{V}_{\RR}(\mathcal{J}:(\textbf{I}(\mathcal{A}_1)\cap\textbf{I}(\mathcal{A}_2))).
    \]
    A direct check shows that the real polynomial $p=\sum_{ij}|z_{ij}|^2$ is a member of the ideal quotient $\mathcal{J}:(\textbf{I}(\mathcal{A}_1)\cap\textbf{I}(\mathcal{A}_2))$. Since the only real zero of $p$ is 0, we conclude that $\mathcal{A}=\mathcal{A}_1\cup \mathcal{A}_2$.
\end{proof}

For the next theorem, we will need to understand the algebra of invariant polynomials $\mathcal{R}=\CC[\mathcal{C}_1]^{W(\mathcal{C}_1)}$. The algebra has a gradation $\mathcal{R}
\cong\bigoplus_{d=0}^\infty\mathcal{R}_d$, where $\mathcal{R}_d$ is the vector space consisting of all homogeneous invariants of degree $d$. Let $x\ket{\psi_0}+y\ket{\psi_1}$ denote an arbitrary element of $\mathcal{C}_1$. By standard computations from classical invariant theory (see \Cref{prop:invariantcomputation} for more details), $\mathcal{R}$ is generated by the following invariants of degrees 6, 8, and 12:
\begin{align*}
    \mathscr{I}_6&:=x^5 y - xy^5,\\
    \mathscr{I}_8&:=x^8 + 14x^4 y^4 + y^8, \text{ and}\\
    \mathscr{I}_{12}&:=x^{12} - 33x^8 y^4 - 33x^4y^8 + y^{12}.
\end{align*}
Thus, $\mathcal{R}_6$ (respectively $\mathcal{R}_8$) is a 1-dimensional vector space spanned by $\mathscr{I}_6$ (respectively $\mathscr{I}_8$). Also, $\mathcal{R}_{12}$ is a 2-dimensional vector space spanned by $\mathscr{I}_{12}$ and $\mathscr{I}_6^2$.
\begin{proposition}\label{thm:surj}
    The map $\CC[(\CC^2)^{\otimes 5}]^{\SL_2^{\times 5}}\to\CC[\mathcal{C}_1]^{W(\mathcal{C}_1)}$ induced by restriction is surjective.
\end{proposition}
\begin{proof}
Let $\mathcal{R}'=\CC[(\CC)^{\otimes 5}]^{\SL_2^{\times 5}}$ denote the algebra of $\SL_2^{\times 5}$-invariant polynomials, and $\mathcal{R}'_d$ denote the space of homogeneous polynomials of degree $d$ in $\mathcal{R}'$. To prove the claim, for each $d\in\{6,8,12\}$ we show that there exists an invariant $\mathscr{F}_d\in \mathcal{R}'_d$ such that $\mathscr{F}_d|_{\mathcal{C}_1}=\mathscr{I}_d$.

There is a unique (up to scale) invariant $\mathscr{F}_6\in \mathcal{R}'_6$. Luque and Thibon describe this polynomial using transvections \cite{Luque_2006}. We verify that $\mathscr{F}_6$ does not vanish on $\mathcal{C}_1$. This suffices since $\mathcal{R}_6$ is 1-dimensional.

For degrees 8 and 12, we use ``filters,'' or invariants that vanish on product states, constructed by Ðoković and Osterloh from ``combs'' \cite{10.1063/1.3075830}. For an odd number of qubits, this comb approach only works for degrees divisible by 4. In the following, let $(c_1,c_2,c_3)=(-1,1,1)$ and $(\sigma_1,\sigma_2,\sigma_3)=(I,X,Z)$. The degree 8 invariant is defined by
\begin{equation*}\label{eq:deg8}
\mathscr{F}_8(v)=\sum_{i_1,i_2,\dots,i_6=1}^{3} c_{i_1}c_{i_2}\dots c_{i_6}\left(\prod_{j=1}^4 v^\top E^{(j)}_{i_1,i_2,\dots,i_6}v \right),\quad \forall v\in(\CC^2)^{\otimes 5}
\end{equation*}
where
\begin{align*}
    E^{(1)}_{i_1,i_2,\dots,i_6}&=\sigma_{i_1}\otimes \sigma_{i_2}\otimes \sigma_{i_3}\otimes Y\otimes Y,\qquad 
    E^{(2)}_{i_1,i_2,\dots,i_6}=\sigma_{i_4}\otimes \sigma_{i_2}\otimes \sigma_{i_3}\otimes Y\otimes Y,\\
    E^{(3)}_{i_1,i_2,\dots,i_6}&=\sigma_{i_4}\otimes Y\otimes Y\otimes \sigma_{i_5}\otimes\sigma_{i_6},\qquad 
    E^{(4)}_{i_1,i_2,\dots,i_6}=\sigma_{i_1}\otimes Y\otimes Y\otimes \sigma_{i_5}\otimes\sigma_{i_6}.
\end{align*}
The degree 12 invariant is defined by
\begin{equation*}\label{eq:deg12}
\mathscr{F}_{12}(v)=\sum_{i_1,i_2,\dots,i_9=1}^3 c_{i_1}c_{i_2}\dots c_{i_9} \left(\prod_{j=1}^6 v^\top E^{(j)}_{i_1\dots i_9}v\right),\quad\forall v\in(\CC^2)^{\otimes 5}
\end{equation*}
where
\begin{align*}
E^{(1)}_{i_1\dots i_9}&=\sigma_{i_1}\otimes\sigma_{i_2}\otimes\sigma_{i_3}\otimes Y\otimes Y,\qquad E^{(2)}_{i_1\dots i_9}=\sigma_{i_1}\otimes\sigma_{i_2}\otimes\sigma_{i_4}\otimes Y\otimes Y, \\
E^{(3)}_{i_1\dots i_9}&=\sigma_{i_5}\otimes Y\otimes\sigma_{i_3}\otimes Y\otimes \sigma_{i_6},\qquad E^{(4)}_{i_1\dots i_9}=\sigma_{i_5}\otimes\ Y\otimes \sigma_{i_4}\otimes Y\otimes \sigma_{i_7}, \\
E^{(5)}_{i_1\dots i_9}&=\sigma_{i_8}\otimes Y\otimes Y\otimes \sigma_{i_9}\otimes \sigma_{i_6},\qquad E^{(6)}_{i_1\dots i_9}=\sigma_{i_8}\otimes Y\otimes Y\otimes \sigma_{i_9}\otimes\sigma_{i_7}.
\end{align*}
Note that $\mathscr{F}_8$ is a summand of the polynomial $\mathscr{F}_6^{(5)}$ from \cite{10.1063/1.3075830} and $\mathscr{F}_{12}$ is the polynomial $\mathscr{F}_{12;2}^{(5)}$ from \cite{10.1063/1.3075830}.
We verify that $\mathscr{F}_8$ does not vanish on $\mathcal{C}_1$. The invariants $\mathscr{F}_{12}$ and $\mathscr{F}_6^2$ restrict to linearly independent elements of $\CC[\mathcal{C}_1]$.
\end{proof}
\begin{theorem}\label{thm:second}
    Two states $\ket{\varphi},\ket{\psi}\in\mathcal{C}_1$ are $\SL_2^{\times 5}$-equivalent if and only if they are $W(\mathcal{C}_1)$-equivalent.
\end{theorem}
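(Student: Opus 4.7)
The plan is to mirror the argument already used for Theorem \ref{thm:wallachsecond}, with the roles of $\mathcal{C}_2$, $(\CC^2)^{\otimes 4}$, and the Weyl group $W(\mathcal{C}_2)$ played by $\mathcal{C}_1$, $(\CC^2)^{\otimes 5}$, and the finite group $W(\mathcal{C}_1)$. The forward implication is essentially definitional: by the definition from Section \ref{sec:stab}, $W(\mathcal{C}_1)$ is the image under $\mu$ of the stabilizer $N(\mathcal{C}_1)\subset\SL_2^{\times 5}$, so every $W(\mathcal{C}_1)$-orbit on $\mathcal{C}_1$ lies in a single $\SL_2^{\times 5}$-orbit in $(\CC^2)^{\otimes 5}$. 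I would dispense with this direction in a single sentence.

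For the reverse direction, suppose $\ket{\varphi},\ket{\psi}\in\mathcal{C}_1$ are $\SL_2^{\times 5}$-equivalent. Then $f(\ket{\varphi})=f(\ket{\psi})$ for every $f\in\CC[(\CC^2)^{\otimes 5}]^{\SL_2^{\times 5}}$. Restricting to $\mathcal{C}_1$ and invoking the surjection of \Cref{thm:surj}, the same equality $g(\ket{\varphi})=g(\ket{\psi})$ holds for every $g\in\CC[\mathcal{C}_1]^{W(\mathcal{C}_1)}$. Since $W(\mathcal{C}_1)$ is finite (\Cref{sec:24}), \Cref{thm:catquotient} applies: the fibers of the categorical quotient $\mathcal{C}_1\to\mathcal{C}_1\sslash W(\mathcal{C}_1)$ are exactly the $W(\mathcal{C}_1)$-orbits, so the equality of all $W(\mathcal{C}_1)$-invariants on $\ket{\varphi}$ and $\ket{\psi}$ forces them into a common $W(\mathcal{C}_1)$-orbit.

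The only nontrivial input in this chain is the surjectivity statement of \Cref{thm:surj}, which has already been established in the preceding subsection via the explicit $\SL_2^{\times 5}$-invariants $f_6$, $\mathscr{G}_6^{(5)}$, and $\mathscr{F}_{12;2}^{(5)}$ restricting to nonzero scalar multiples of $\mathscr{I}_6$, $\mathscr{I}_8$, and $\mathscr{I}_{12}$ (together with $f_6^2$ being independent from $\mathscr{F}_{12;2}^{(5)}|_{\mathcal{C}_1}$ in the 12th graded piece). In particular, the main conceptual obstacle — producing global $\SL_2^{\times 5}$-invariants that hit the three generators of $\CC[\mathcal{C}_1]^{W(\mathcal{C}_1)}$ — has already been overcome; what remains for the present theorem is only to package these ingredients through the categorical-quotient formalism, exactly as was done for the 4-qubit case.
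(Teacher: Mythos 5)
Your proposal is correct and follows essentially the same route as the paper: the forward direction is definitional, and the reverse direction passes from equality of $\SL_2^{\times 5}$-invariants through the surjection of \Cref{thm:surj} to equality of all $W(\mathcal{C}_1)$-invariants, then concludes via \Cref{thm:catquotient}. The paper merely phrases the middle step in terms of the three explicit generators $\mathscr{I}_6,\mathscr{I}_8,\mathscr{I}_{12}$ rather than the full invariant algebra, which is an immaterial difference.
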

\begin{proof}
    By the definition of $W(\mathcal{C}_1)$, if $\ket{\varphi}$ and $\ket{\psi}$ are $W(\mathcal{C}_1)$-equivalent, then they are $\SL_2^{\times 5}$-equivalent. Conversely, suppose that $\ket{\varphi}$ and $\ket{\psi}$ are $\SL_2^{\times 5}$-equivalent. By \Cref{thm:surj}, $\mathscr{I}_i=\mathscr{F}_i|_{\mathcal{C}_1}$ for some $\SL_2^{\times 5}$-invariants $\mathscr{F}_i$, so we must have $\mathscr{I}_i(\ket{\varphi})=\mathscr{F}_i(\ket{\varphi})=\mathscr{F}_i(\ket{\psi})=\mathscr{I}_i(\ket{\psi})$ for all $i\in \{6,8,12\}$. Since the $\mathscr{I}_i$ generate the invariant algebra, $\ket{\varphi}$ and $\ket{\psi}$ are in the same $W(\mathcal{C}_1)$-orbit (see \Cref{thm:catquotient}).
\end{proof}

\begin{theorem}\label{thm:separating}
    Let $\ket{\psi},\ket{\varphi}\in(\CC^2)^{\otimes 5}$ be AME and $\mathcal{S}=\{\mathscr{F}_6,\mathscr{F}_8,\mathscr{F}_{12}\}$ be the set of $\SL_2^{\times 5}$-invariant polynomials from the proof of \Cref{thm:surj}. The states $\ket{\psi}$ and $\ket{\varphi}$ are $\SL_2^{\times 5}$-equivalent if and only if $\mathscr{F}(\ket{\varphi})=\mathscr{F}(\ket{\psi})$ for all $\mathscr{F}\in \mathcal{S}$.
\end{theorem}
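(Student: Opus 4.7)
The plan is to reduce to the orbit classification on $\mathcal{C}_1$ provided by \Cref{thm:second}, then argue that the three polynomials in $\mathcal{S}$ contain enough information to recover the generators $\mathscr{I}_6,\mathscr{I}_8,\mathscr{I}_{12}$ of $\CC[\mathcal{C}_1]^{W(\mathcal{C}_1)}$. The forward direction is immediate: each $f\in\mathcal{S}$ is $\SL_2^{\times 5}$-invariant, so $\SL_2^{\times 5}$-equivalent states are assigned equal values.

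For the converse, assume $f(\ket{\varphi})=f(\ket{\psi})$ for all $f\in\mathcal{S}$. Since both states are AME, \Cref{thm:first} supplies $g_1,g_2\in\SU_2^{\times 5}$ with $g_1\ket{\varphi},g_2\ket{\psi}\in\mathcal{C}_1$. The invariance of the polynomials in $\mathcal{S}$ preserves the hypothesis, so I may assume from the outset that $\ket{\varphi},\ket{\psi}\in\mathcal{C}_1$ and agree on $\mathcal{S}$.

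The crux is then to show that the triple of values $(f_6,\mathscr{G}^{(5)}_6,\mathscr{F}^{(5)}_{12;2})$ determines the triple $(\mathscr{I}_6,\mathscr{I}_8,\mathscr{I}_{12})$ on $\mathcal{C}_1$. From the Molien computation, $\textrm{Sym}^6(\mathcal{C}_1^*)^{W(\mathcal{C}_1)}$ and $\textrm{Sym}^8(\mathcal{C}_1^*)^{W(\mathcal{C}_1)}$ are one-dimensional, so the nonvanishing of $f_6|_{\mathcal{C}_1}$ and $\mathscr{G}^{(5)}_6|_{\mathcal{C}_1}$ established in the proof of \Cref{thm:surj} forces each to be a nonzero scalar multiple of $\mathscr{I}_6$ and $\mathscr{I}_8$ respectively. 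In degree $12$, the two-dimensional invariant space is spanned by $\mathscr{I}_6^2$ and $\mathscr{I}_{12}$, and the linear independence of $\mathscr{F}^{(5)}_{12;2}|_{\mathcal{C}_1}$ and $f_6^2|_{\mathcal{C}_1}$ (again from \Cref{thm:surj}) means the pair $(f_6(\ket{\varphi}),\mathscr{F}^{(5)}_{12;2}(\ket{\varphi}))$ can be linearly inverted to recover $\mathscr{I}_{12}(\ket{\varphi})$. Thus $\mathscr{I}_d(\ket{\varphi})=\mathscr{I}_d(\ket{\psi})$ for $d\in\{6,8,12\}$.

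Since the $\mathscr{I}_d$ generate $\CC[\mathcal{C}_1]^{W(\mathcal{C}_1)}$, \Cref{thm:catquotient} yields $W(\mathcal{C}_1)$-equivalence of $\ket{\varphi}$ and $\ket{\psi}$, which implies $\SL_2^{\times 5}$-equivalence by the definition of $W(\mathcal{C}_1)$ (or directly by \Cref{thm:second}). Composing with the reductions $g_1,g_2$ gives $\SL_2^{\times 5}$-equivalence of the original states. The only nontrivial obstacle is the linear independence of $\mathscr{F}^{(5)}_{12;2}|_{\mathcal{C}_1}$ and $f_6^2|_{\mathcal{C}_1}$; this is a finite explicit calculation that has already been performed in \Cref{thm:surj}.
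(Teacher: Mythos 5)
Your proof is correct and follows essentially the same route as the paper's: reduce to points of $\mathcal{C}_1$ via \Cref{thm:first}, use the dimension counts and non-vanishing/independence facts from the proof of \Cref{thm:surj} to conclude that agreement on $\mathcal{S}$ forces agreement of all $W(\mathcal{C}_1)$-invariants, and finish with \Cref{thm:catquotient}. The only difference is that you spell out explicitly the linear algebra by which $(f_6,\mathscr{G}^{(5)}_6,\mathscr{F}^{(5)}_{12;2})|_{\mathcal{C}_1}$ determines $(\mathscr{I}_6,\mathscr{I}_8,\mathscr{I}_{12})$, a step the paper leaves implicit in the phrase ``by the proof of \Cref{thm:surj}.''
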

\begin{proof}
    One direction is clear: if $\ket{\varphi}$ and $\ket{\psi}$ are $\SL_2^{\times 5}$-equivalent, then $\mathscr{F}(\ket{\varphi})=\mathscr{F}(\ket{\psi})$ for all $\mathscr{F}\in \mathcal{S}$. By \Cref{thm:first}, $\ket{\varphi}$ and $\ket{\psi}$ are $\SU_2^{\times 5}$-equivalent to points $\ket{\varphi'}$ and $\ket{\psi'}$ in $\mathcal{C}_1$. If $\mathscr{F}(\ket{\varphi})=\mathscr{F}(\ket{\psi})$ for all $\mathscr{F}\in \mathcal{S}$, then by the proof of \Cref{thm:surj} we have $\mathscr{F}'(\ket{\varphi'})=\mathscr{F}'(\ket{\psi'})$ for every $W(\mathcal{C}_1)$-invariant $\mathscr{F}'\in\CC[\mathcal{C}_1]$. It follows that $\ket{\varphi'}$ and $\ket{\psi'}$ are equivalent and so are $\ket{\varphi}$ and $\ket{\psi}$ (see \Cref{thm:catquotient}).
\end{proof}

We remind the reader that two AME states in $(\CC^2)^{\otimes 5}$ are $\SL_2^{\times 5}$-equivalent if and only if they are $\SU_2^{\times 5}$-equivalent (\Cref{prop:K-N}). Therefore, by the results of this section, we have proved \Cref{thm:mainresult}.

\section{Summary and outlook}
The main result of this paper is a complete classification of 5-qubit AME states (\Cref{thm:mainresult}). To achieve this, we made use of deep interconnections between $r$-uniform states, quantum error correcting codes, and Vinberg's theory of graded Lie algebras; this is evidenced by the various secondary results obtained along the way (\Cref{thm:general,thm:git1,thm:git2,thm:3uniform,thm:GHZ3}). Our classification reinforces the link between $r$-uniform states and MDS codes: every point of the $((5,2,3))$ code is AME and, moreover, all 5-qubit AME states arise from the $((5,2,3))$ code. We give three invariant polynomials that separate the $\SU_2^{\times 5}$-orbits of these AME states; this gives a powerful and practical theoretical tool to prove or rule out the equivalence of any two AME states.

More broadly, this paper completes the story of the MDS family arising from the 6-qubit AME state $\ket{\Psi}$. Rains gives us the construction \cite{Rains1996QuantumWE} that defines the MDS family and showed that every code in this family is unique up to a local unitary change of coordinates \cite{746807}. The transversal gates of $\mathcal{C}_2$ determine the local symmetries of $\ket{\Psi}$, which in turn determine the transversal gates of $\mathcal{C}_1$. Every 1-uniform state in $(\CC^2)^{\otimes 4}$ is equivalent to a point of the $((4,4,2))$ code $\mathcal{C}_2$ and two points of $\mathcal{C}_2$ are equivalent if and only if there is a transversal gate mapping one point to the other. Similarly, every $2$-uniform state in $(\CC^2)^{\otimes 5}$ is equivalent to a point of the $((5,2,3))$ code $\mathcal{C}_1$ and two points of $\mathcal{C}_1$ are equivalent if and only if there is a transversal gate mapping one point to the other. This raises the question: do similar facts hold for other MDS families?

To prove our main theorem, we relied on the embedding $(\CC^2)^{\otimes 4}\to\mathfrak{so}_8$. This is a special phenomenon and limits the generalizability of our techniques. We note, however, that the state space $(\CC^3)^{\otimes 3}$ of three qutrits is the grade-1 subspace of the $\ZZ_3$-graded Lie algebra $\mathfrak{e}_6$. This situation has been explored in previous works \cites{Nurmiev_2000,Jaffali_2024}, including one by the author \cite{Tan2026}. However, there are questions yet to be answered, such as whether an analogous version of \Cref{thm:general} holds in $\mathfrak{e}_6$. We also relied heavily on symbolic computations. Finding cleaner, more conceptual proofs may broaden our understanding of $r$-uniform states and error correcting codes, leading to further theoretical advancements.

\section{Acknowledgements} I thank Luke Oeding and Henry Schenck for helpful discussions about algebraic geometry. I also thank Felix Huber and  Ramadas N for providing helpful references.

\section*{Code Availability}
The accompanying Macaulay2 files for this paper can be found at 
\url{https://github.com/ian-tan/Classification-of-5-qubit-AME-states}. These were used to prove \Cref{thm:3uniform}, \Cref{thm:first}, \Cref{thm:surj}, and \Cref{prop:invariantcomputation}. Also included is a file for symbolic evaluation of the three separating invariants of \Cref{thm:mainresult}.

\bibliographystyle{amsplain}
\bibliography{refs}

\appendix

\section{Algebraic geometry}\label{sec:algebraic_geometry}
Let $V$ be a finite-dimensional vector space over a field $\mathbb{K}$.
Given an ideal $\mathcal{I}$ in the ring of polynomials $\mathbb{K}[V]$, we define
\[
\textbf{V}_\mathbb{K}(\mathcal{I})=\{x\in V:\text{$f(x)=0$ for all $f\in\mathcal{I}$}\}.
\]
A subset $\mathcal{A}$ of $V$ is a (affine) \textit{variety} if $\mathcal{A}=\textbf{V}_\mathbb{K}(\mathcal{I})$ for some ideal $\mathcal{I}\subset\mathbb{K}[V]$. A variety $\mathcal{A}$ is \textit{irreducible} if it not the union of two proper nonempty subvarieties (a \textit{subvariety} of $\mathcal{A}$ is a subset of $\mathcal{A}$ that is also a variety). Given a variety $\mathcal{A}\subset V$, the ideal of $\mathcal{A}$ is defined as
\[
\textbf{I}(\mathcal{A})=\{f\in \mathbb{K}[V]:\text{$f(x)=0$ for all $x\in \mathcal{A}$}\}.
\]
It is easy to show that the intersection of ideals corresponds to the union of varieties, that is, $\textbf{V}_\mathbb{K}(\mathcal{I}\cap\mathcal{J})=\textbf{V}_\mathbb{K}(\mathcal{I})\cup\textbf{V}_\mathbb{K}(\mathcal{J})$.
If $\mathcal{I},\mathcal{J}$ are ideals in $\mathbb{K}[V]$, the \textit{ideal quotient of $\mathcal{I}$ by $\mathcal{J}$} is the ideal
\[
\mathcal{I}:\mathcal{J}=\{f\in\mathbb{K}[V]: \text{$fg\in\mathcal{I}$ for all $g\in\mathcal{J}$}\}.
\]

\begin{proposition}[Irreducible decomposition, {\cite[Section 6]{CoxLittleOShea2015}}]
    Every variety $\mathcal{A}$ can be written as $\mathcal{A}=\mathcal{A}_1\cup\dots\cup\mathcal{A}_r$, where each $\mathcal{A}_i$ is an irreducible variety and $\mathcal{A}_i\nsubseteq\mathcal{A}_j$ for $i\neq j$. Such a decomposition is unique up to reordering of the irreducible components $\mathcal{A}_i$.
\end{proposition}

\begin{proposition}[{\cite[Section 4]{CoxLittleOShea2015}}]\label{thm:colon}
    If $\mathcal{I}$ and $\mathcal{J}$ are ideals of $\mathbb{K}[V]$, then $\textbf{V}_\mathbb{K}(\mathcal{I})\setminus\textbf{V}_\mathbb{K}(\mathcal{J})\subset \textbf{V}_\mathbb{K}(\mathcal{I}:\mathcal{J})$.
\end{proposition}

\section{Invariant theory}\label{sec:invariant_theory}
Let $V$ be a finite-dimensional complex vector space and let $G\to\GL(V)$ be a representation of a group $G$. The algebra of polynomials $\CC[V]$ on the vector space is a $G$-module by the representation $\alpha:G\to \GL(\CC[V])$ defined by $(g.f)(v)=f(g^{-1}.v)$ for $g\in G$, $f\in \CC[V]$, and $v\in V$. We write $\CC[V]^G=\{f\in \CC[V]:\text{$g.f=f$ for all $g\in G$}\}$ to denote the subalgebra of \textit{$G$-invariant} polynomials. Let $\CC[V]_d^G$ denote the space of all homogeneous $G$-invariant polynomials of degree $d$.

For the remainder of this section, assume that $G$ is finite. The operator $R=\frac{1}{|G|}\sum_{g\in G} \alpha(g)$ is known as the \textit{Reynolds operator}. We observe that $R$ is a projection onto the invariant subalgebra $\CC[V]^G$. That is, the image of $R$ is $\CC[V]^G$ and $R\circ R=R$.

\begin{proposition}\label{thm:catquotient}
    Let $x,y\in V$. Then $x=g.y$ for some $g\in G$ if and only if $f(x)=f(y)$ for all $f\in \CC[V]^G$.
\end{proposition}
\begin{proof}
    This result is presented in greater generality in \cite[Chapter 5]{Mukai_2003}. To see that the general version implies our proposition, note that every finite group is linearly reductive \cite[Proposition 4.3.8]{Mukai_2003} and the orbit of a finite group is finite and thus closed.
\end{proof}

\begin{lemma}[Molien's formula, {\cite[Theorem 2.2.1]{SturmfelsInvariant}}]\label{thm:molien} The following equality of generating functions holds:
    \[
    \sum_{d=0}^\infty \dim(\CC[V]_d^G)t^d=\frac{1}{|G|}\sum_{g\in G}\frac{1}{\det(I-t\alpha(g))}.
    \]
\end{lemma}

\begin{lemma}[Noether's degree bound, {\cite[Theorem 2.1.4]{SturmfelsInvariant}}]\label{thm:noether}
    The invariant algebra $\CC[V]^G$ is generated by the set of polynomials $f\in\CC[V]^G$ such that the degree of $f$ is at most $|G|$.
\end{lemma}
\begin{proposition}\label{prop:invariantcomputation}
    Let $W=\langle\im X,\im Z,H\rangle$ with $H$ defined as in \Cref{thm:mainresult}. The invariant algebra $\CC[x,y]^W$ is generated by the polynomials $\mathscr{I}_6$, $\mathscr{I}_8$, and $\mathscr{I}_{12}$ defined in \Cref{sec:classification}.
\end{proposition}
\begin{proof}
    What follows are standard computations for finding a generating set, using the fact that $W$ is finite. Recall, from \Cref{sec:24}, that $|W|=24$. Applying Molien's formula, we calculate
    \begin{equation*}
    \sum_{d=0}^\infty \dim(\CC[x,y]_d^{W})t^d =\frac{1}{24}\sum_{g\in W} \frac{1}{\det (I-tg)}
    =\frac{1-t^4+t^8}{(1-t^4)(1-t^6)}.
    \end{equation*}
    Expanding the above, we obtain the following.
    \begin{equation}\label{eq:genfunction}
    1+t^6+t^8+2t^{12}+t^{14}+t^{16}+2t^{18}+2t^{20}+t^{22}+3t^{24}+\cdots
    \end{equation}
    Using the Reynolds operator $f\mapsto\frac{1}{24}\sum_{g\in W} g.f$, we find that $2x^5 y\mapsto\mathscr{I}_6$, $\frac{24}{5}x^8\mapsto \mathscr{I}_8$, and $\frac{32}{5}x^{12}\mapsto \mathscr{I}_{12}$. Inspecting the coefficients of the generating function \eqref{eq:genfunction}, we see that the polynomials $\mathscr{I}_i$ generate the invariant algebra up to degree 24. By Noether's degree bound, they generate the invariant algebra.
\end{proof}
\end{document}